\documentclass[letterpaper, 10 pt, conference]{ieeeconf}
\IEEEoverridecommandlockouts 
\usepackage{cite}
\usepackage{amsmath, amssymb, amsfonts}
\usepackage{graphicx}
\usepackage{textcomp}
\usepackage{xcolor}
\usepackage{array}
\usepackage{float}
\usepackage{algorithm}
\usepackage{empheq}
\usepackage{algorithmic}
\usepackage{mathtools} 
\usepackage[hidelinks]{hyperref}

\newtheorem{Theo}{Theorem}
\newtheorem{Lem}{Lemma}

\newtheorem{Rem}{Remark}

\newtheorem{Def}{Definition}

\newtheorem{Prop}{Proposition}

\def\BibTeX{{\rm B\kern-.05em{\sc i\kern-.025em b}\kern-.08em
    T\kern-.1667em\lower.7ex\hbox{E}\kern-.125emX}}

\title{\LARGE \bf Consensus and Synchronization of Multi-agent Systems over Finite Fields - Graph Topologies}

\author{Kristian Hengster-Movrić, Šimon Lehký and Farnaz Adib Yaghmaie%
\thanks{The work of Kristian Hengster-Movrić and Šimon Lehký was co-funded by the European Union under the project ROBOPROX (reg. no. CZ.02.01.01/00/22\_008/0004590). Farnaz Adib Yaghmaie is supported by the Excellence Center at Linköping–Lund in Information Technology (ELLIIT), ZENITH, and partially by Sensor informatics and Decision-making for the Digital Transformation (SEDDIT). This work was performed within the Competence Center SEDDIT (Sensor Informatics and Decision making for the DIgital Transformation), supported by Sweden’s Innovation Agency Vinnova within the research and innovation program Advanced digitalization.}%
\thanks{Kristian Hengster-Movrić and Šimon Lehký are with Faculty of Electrical Engineering, Czech Technical University, Prague, Czech Republic
        {\tt\small \{hengskri, lehkysim\}@fel.cvut.cz}}%
\thanks{Farnaz Adib Yaghmaie is with Faculty of Electrical Engineering, Linköping University, Linköping, Sweden
        {\tt\small farnaz.adib.yaghmaie@liu.se}}%
}

\usepackage{ifthen}
\usepackage{eso-pic}
\usepackage{xcolor}

\newif\ifarxivstamp
\arxivstamptrue

\newcommand{\ArxivStampVenue}{IEEE Conference on Decision and Control (Regular Paper)}
\newcommand{\ArxivStampDate}{April 2, 2026}
\newcommand{\ArxivStampFunding}{Co-funded by the European Union under the project ROBOPROX (reg. no. CZ.02.01.01/00/22\_008/0004590).}

\newcommand{\ArxivSubmittedStampOverlay}{%
    \ifarxivstamp
    \AddToShipoutPictureFG*{%
    \AtPageUpperLeft{%
    \raisebox{-1.6\baselineskip}[0pt][0pt]{%
        \begin{minipage}{\paperwidth}\centering\footnotesize
            \color{black}
            \hrule \vspace{2pt}
            \textbf{Manuscript submitted to \ArxivStampVenue, \ArxivStampDate.} \\
            Preprint (initial submission). \ArxivStampFunding
            \vspace{2pt}\hrule
            \end{minipage}%
            }%
        }%
    }%
    \fi
}

\begin{document}
\ArxivSubmittedStampOverlay

\maketitle
\thispagestyle{empty}
\pagestyle{empty}

\begin{abstract}
This paper brings cooperative protocols for multi-agent systems with agents having a finite state-space. Both scalar single-integrator consensus and general LTI systems synchronization are considered. Systems having a finite state-space describe agents with minimal memory capacity processing only a finite alphabet. Such systems are remarkably resilient to communication noise. The crucial problem, however, is to construct the admissible communication topology, which is NP-hard. We address this by efficiently exploring the subsets of admissible matrices and propose two new algorithms to generate the topologies. Simulations validate the proposed approach. 
\end{abstract}

\section{Introduction and Notation} \label{sec:intro} \noindent
Multi-agent systems consisting of agents with minimal memory capacity, processing only values from a finite alphabet, are becoming increasingly interesting due to the emerging IoT coding requirements for secure communication over digital networks,~\cite{chen2017programmable_galois}, capacity and memory constrained sensor networks and for describing discretized orientations of rigid bodies,~\cite{pasqualetti2014consensus_networks}, to name just a few compelling applications. 

Consensus of real scalar single-integrator agents over nonnegative real edge-weight graphs has been long studied from multiple perspectives. Fundamental results on the graph topology necessary for consensus of single-integrators were extended to synchronization of general linear time-invariant (LTI) agents over the same underlying graphs. Spanning tree subgraph is found to be the minimal necessary requirement for cooperative synchronization of dynamical systems evolving in continuous-time and discrete-time, on finite-dimensional vector spaces over continuum fields of real or complex numbers. Extensions of consensus and synchronization protocols to \textit{signed} real edge-weight graphs,~\cite{altafini2013consensus_antagonistic,zhang2014bipartite} and \textit{complex} edge-weight graphs,~\cite{lou2015distributed_surrounding, dong2014complex}, both for single-integrator and general LTI agents followed thereafter,~\cite{yaghmaie2017multiparty_consensus, yaghmaie2017bipartite_cooperative}. 

Modular systems and finite-state automata, also referred to as sequential machines~\cite{booth1967sequential}, are likewise studied for a long time,~\cite{elspas1959theory_autonomous,friedland1959LinearMS,hartmanis1959LinearMS}. Theory exists for LTI sequential modular systems evolving in discrete-time steps on finite-dimensional vector spaces over finite (\textit{Galois}) fields, \textit{Hamming spaces}; frequency domain $Z$-transformation transfer function analysis, formally the same as for conventional discrete-time systems over fields of real and complex numbers, is useful there as well~\cite{friedland1959LinearMS, hartmanis1959LinearMS, booth1967sequential}, as are the polynomial-based approaches,~\cite{friedland1959LinearMS, hartmanis1959LinearMS}. Nevertheless, cooperative synchronization of such systems over graphs with finite-field edge-weights has not been hitherto systematically investigated in a satisfactorily unified framework, although~\cite{pasqualetti2014consensus_networks} makes crucial first steps in that direction, while~\cite{xu2014leaderfollowing,meng2020synchronization} propose extensions to more general single-agents and~\cite{li2019leader_follower} even addresses time-delays.

This paper develops an analysis and design methodology for consensus of identical LTI sequential modular systems, following the spirit of the conventional synchronizing region approach. For an LTI system that is finite-field stabilizable, we show that the unique stabilizing controller gain is obtained directly and transparently from either the controllable canonical form or the Kalman decomposition of the single-agent dynamics. Thus, the essential challenge is \textit{not} the controller design itself, but rather the identification of admissible communication structures, (graph topologies and edge weights), that render a cooperatively stabilizing control feasible. The finite-field admissible graph topology problem is identified in~\cite{pasqualetti2014consensus_networks} as NP-hard. In this work, we demonstrate that the design of admissible communication structures for synchronization is completely independent of the single-agent characteristics. This decoupling allows one to address graph design separately from single-agent controller synthesis.

The contributions of this paper are twofold. First, we develop a unified analysis and design framework for state consensus and synchronization of modular multi-agent systems. In contrast to~\cite{pasqualetti2014consensus_networks}, which considers only single-integrator dynamics, we address general identical LTI agents. This also generalizes the results of~\cite{meng2020synchronization}, which considers single-agent dynamics of a special form. Moreover, in contrast to~\cite{xu2014leaderfollowing}, we do not restrict graphs to be directed acyclic (DAG), nor do we require the equality of all in-degrees, but allow for all finite-field admissible graph topologies. Second, we propose efficient algorithms for generating admissible graph topologies that guarantee modular consensus, and we analyze their computational complexity. To the best of our knowledge, these results have not been previously reported.

The paper is organized as follows: Section~\ref{sec:intro} brings the introduction and preliminaries on finite fields, graphs and notation. Section~\ref{sec:scalar_single_int} introduces the scalar single-integrator consensus, while Section~\ref{sec:general_lti_seq} considers general finite-field LTI agents. Section~\ref{sec:math_foundation} elucidates the structures to be explored, Section~\ref{sec:algorithms} brings the efficient search algorithms,  Section~\ref{sec:sim_results} gives a numerical example, Section~\ref{sec:conclusion} concludes the paper.

\subsection{Finite Field $\mathbb{F}_{p}$} \label{subsec:fin_field} \noindent
A finite field $\mathbb{F}_{p}$ consists of the set of elements $\{0,\:1,...,p-1\}$ with addition and multiplication operations defined. In particular, we denote the finite field of integers modulo $p$ by $\mathbb{Z}_{p}$. Throughout, we assume that $p$ is a prime number so that $\mathbb{Z}_{p}$ forms a \textit{field} and contains no zero divisors. All algebraic operations in this paper are performed modulo $p$.
 
For a matrix $T \in \mathbb{F}_{p}^{m \times n}$, the entry in the $i$-th row and $j$-th column is written as $[T]_{ij} \in \mathbb{F}_{p}$. The $i$-th row ($j$-th column) of a matrix $T$ is denoted by $[T]_{i,:}$ ($[T]_{:,j}$). The transpose of matrix $T$ is denoted by $T^{\top}$. Let $T \in \mathbb{F}_{p}^{s \times s}$ be an $s \times s$ matrix. The \emph{trace} of $T$ is defined as $\operatorname{Tr}(T) = \sum\nolimits_{i=1}^s [T]_{ii} \in \mathbb{F}_{p},$ and the \emph{determinant} of $T$ is denoted by $\det(T)$ where all additions and multiplications are performed modulo $p$. Equivalently, $T$ is invertible over $\mathbb{F}_{p}$ \textit{if and only if} $\det(T) \neq 0$ in $\mathbb{F}_{p}$. For an invertible matrix $T$ and any matrix $A$, the matrix $T^{-1}AT$ is \textit{similar} to $A$, $T^{-1}AT\sim A$. $I_N$ denotes an $N\times N$ identity matrix with ones on the diagonal. A vector of ones with appropriate dimension is denoted by $\mathbf{1} = (1, 1, \dots, 1)^{\top}$, or $\mathbf{1}_N$ when the dimension is of concern. Similarly, $\mathbf{0} = (0, 0, \dots, 0)^{\top}$ or $\mathbf{0}_N$ denote a vector of zeros. Let $e_i$ denote the $i$-th basis vector of $\mathbb{F}_{p}^{n}$ whose entries are all zero except for the $i$-th entry, which is equal to one.

\subsection{Graphs with Finite-field Edge-weights} 
\label{subsec:graphs_fin_fields_weights} \noindent
For a graph $\mathcal{G}(\mathcal{V}, \mathcal{E}, E)$, having $|\mathcal{V}|=N$ and $|\mathcal{E}|=M$, with finite-field edge-weights, the \textit{adjacency matrix} $E \in \mathbb{F}_{p}^{N \times N}$, with elements $e_{ij}$, is an $N \times N$ matrix with finite-field entries; $e_{ij} \in \mathbb{F}_{p}, e_{ij} \neq 0$ if $(i,j) \in \mathcal{E}$ and $e_{ij} = 0$ if $(i,j) \notin \mathcal{E}$. Conversely, any $N \times N$ matrix with finite-field entries induces a graph $\mathcal{G}(\mathcal{V}, \mathcal{E}, E)$ with finite-field edge-weights. If there exists a node emanating directed paths to all other nodes in the graph, the graph contains a \textit{spanning tree} subgraph, and each such globally influential node is a possible \textit{root} of a spanning tree.

\section{Scalar Single-Integrator Modular Consensus} \label{sec:scalar_single_int} \noindent
Consider a multi-agent system comprised of $N$ agents with state dynamics
\begin{equation}
    \label{eq:scalar_consensus}
    x_i(k+1) = \sum\nolimits_j e_{ij} x_j(k), \quad x_i\in \mathbb{F}_{p}, \qquad i = 1, \dots, N,
\end{equation}
where possibly $e_{ii} \neq 0$. In global form,~\eqref{eq:scalar_consensus} reads
\begin{equation}
    x(k+1) = Ex(k) \in \mathbb{F}_{p}^N, \quad E \in \mathbb{F}_{p}^{N \times N}.
\end{equation}
\begin{Theo}
    \label{Theo:Theorem1}
    Let $E$ be nilpotent with the characteristic polynomial $P_E(\lambda) = \lambda^N$, or let $E$ be row-stochastic, $E \mathbf{1}_N = \mathbf{1}_N$, with the characteristic polynomial $P_E(\lambda) = (\lambda - 1)\lambda^{N-1}$ and left eigenvector $p^{\top} \in \mathbb{F}_{p}^N, p^{\top} E = p^{\top}$, pertaining to the simple eigenvalue $1$, satisfying $p^{\top} \mathbf{1}_N \neq 0$. Then and only then does the state of the LTI sequential modular system $x(k+1) = Ex(k)$ converge to consensus $x(k) \rightarrow \alpha \mathbf{1}_N$ for all initial conditions in finitely many discrete iterations. The final consensus value $\alpha$ is zero for a nilpotent matrix $E$, while for a row-stochastic matrix $E$ this $\alpha$ is given in terms of the initial conditions as $\alpha = p^{\top} x(0) / p^{\top} \mathbf{1}_N$.
\end{Theo}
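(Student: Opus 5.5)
The plan has two directions. For sufficiency I would use the primary (Fitting) decomposition of $\mathbb{F}_p^N$ with respect to $E$. For necessity I would exploit the fact that over a finite field "convergence in finitely many iterations" means the sequence $x(k)$ eventually equals $\alpha\mathbf{1}_N$ and stays there. The key observation is that the consensus set $\{\alpha\mathbf{1}_N\}$ must be reached exactly, for every initial condition, after some finite $K$. Since $\mathbb{F}_p^N$ is finite, we can take $K$ uniform over all $x(0)$.

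\textbf{Sufficiency, nilpotent case.} If $P_E(\lambda)=\lambda^N$, Cayley--Hamilton gives $E^N=0$. Hence $x(k)=0=0\cdot\mathbf{1}_N$ for all $k\ge N$, which is consensus with $\alpha=0$.

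\textbf{Sufficiency, row-stochastic case.} Here $P_E(\lambda)=(\lambda-1)\lambda^{N-1}$. Since $\gcd(\lambda-1,\lambda^{N-1})=1$, we get $\mathbb{F}_p^N=\ker(E-I)\oplus\ker E^{N-1}$. Because $1$ is a simple eigenvalue, $\ker(E-I)=\operatorname{span}\{\mathbf{1}_N\}$. Every $x(0)$ decomposes as $x(0)=\alpha\mathbf{1}_N+w$ with $E^{N-1}w=0$, so $x(k)=\alpha\mathbf{1}_N$ for all $k\ge N-1$. To identify $\alpha$, note that $p^\top E=p^\top$ gives $p^\top x(k)=p^\top x(0)$ for every $k$. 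Evaluating at $k\ge N-1$ yields $\alpha\, p^\top\mathbf{1}_N=p^\top x(0)$, and the hypothesis $p^\top\mathbf{1}_N\neq0$ lets us divide. Equivalently, $E^{N-1}=\mathbf{1}_N p^\top/(p^\top\mathbf{1}_N)$. I would also remark that in this setting $p^\top\mathbf{1}_N\ne0$ is automatically implied by the direct sum decomposition. This is because $p^\top$ annihilates the image of $E-I$, which equals $\ker E^{N-1}$, so $p^\top\mathbf{1}_N=0$ would force $p^\top=0$. Stating the condition explicitly is harmless either way.

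\textbf{Necessity.} Suppose every trajectory reaches consensus and stays there after $K$ steps. Then $E^K x\in\operatorname{span}\{\mathbf{1}_N\}$ for all $x$, so $\operatorname{rank}E^K\le1$. Also $E^{K+1}x=E(E^Kx)$ must again be a consensus vector. I would split into two cases.

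\emph{Case 1: $E^K=0$.} Then $E$ is nilpotent and $P_E=\lambda^N$.

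\emph{Case 2: $E^K\neq0$.} Then $E^K=\mathbf{1}_N q^\top$ with $q\ne0$. Applying $E$ gives $E\mathbf{1}_N q^\top=E^{K+1}$, and its range must lie in $\operatorname{span}\{\mathbf{1}_N\}$, so $E\mathbf{1}_N=\mu\mathbf{1}_N$. If $\mu=0$, then $E^{K+1}=0$ and we are back in the nilpotent case. If $\mu\ne0$, the image of $E^K$ is a one-dimensional invariant subspace on which $E$ acts as $\mu$. Every other generalized eigenvalue is $0$, since $E^K$ kills the complementary Fitting component, and the eigenvalue $\mu$ is simple. Hence $P_E=(\lambda-\mu)\lambda^{N-1}$.

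\textbf{Main obstacle.} The delicate point is to force $\mu=1$. Convergence to a single fixed consensus requires $x(k)=\alpha\mathbf{1}_N$ to be a fixed point, since the statement means $x(k)$ stays at $\alpha\mathbf{1}_N$. This gives $\alpha\mu=\alpha$ for some nonzero $\alpha$ attained by a suitable $x(0)$ (for example, one with $q^\top x(0)\ne0$), hence $\mu=1$. I would make explicit that "converges" is interpreted as eventual constancy; otherwise $\mu\ne1$ would only produce periodic consensus $\mu^k\alpha\mathbf{1}_N$. Finally, the left eigenvector for the simple eigenvalue $1$ is $p^\top\propto q^\top$, and $p^\top\mathbf{1}_N\ne0$ follows from the argument in the sufficiency remark above.
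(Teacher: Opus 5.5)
Your proof is correct. Your computation of $\alpha$ through the conserved quantity $p^{\top}x(k)=p^{\top}x(0)$ is exactly the paper's argument.

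The difference is in the ``then and only then'' part. The paper handles it in one line by citing the modal decomposition in Pasqualetti et al.; you make both directions self-contained.
\begin{itemize}
\item Your sufficiency step via $\mathbb{F}_p^N=\ker(E-I)\oplus\ker E^{N-1}$ is that modal decomposition written out. It also gives the explicit bound of $N-1$ iterations and the formula $E^{N-1}=\mathbf{1}_N p^{\top}/(p^{\top}\mathbf{1}_N)$.
\item Your necessity step has no counterpart in the paper: $\operatorname{rank}E^K\le 1$, then $E\mathbf{1}_N=\mu\mathbf{1}_N$, then the Fitting decomposition giving $P_E=(\lambda-\mu)\lambda^{N-1}$, then the fixed-point requirement forcing $\mu=1$.
\end{itemize}

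Your version brings out two things the paper leaves implicit. First, as you note, $p^{\top}\mathbf{1}_N\neq 0$ is automatic once $E\mathbf{1}_N=\mathbf{1}_N$ and $P_E=(\lambda-1)\lambda^{N-1}$. So that hypothesis is redundant, and the case $p^{\top}\mathbf{1}_N=0$ discussed in the remark after the theorem cannot occur under its hypotheses.

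Second, you correctly identify that the converse depends on reading ``converge'' as eventual constancy. Over $\mathbb{F}_3$, take $E=\begin{bmatrix}2&0\\2&0\end{bmatrix}$. Every trajectory lies in $\operatorname{span}\{\mathbf{1}_N\}$ from $k=1$ on, yet the common value alternates between $2x_1(0)$ and $x_1(0)$, and $P_E=\lambda(\lambda-2)$. Under the weaker reading the ``only if'' would therefore fail, and your explicit interpretation is needed.
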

\begin{proof}
    First part follows from modal decomposition,~\cite{pasqualetti2014consensus_networks}. Nilpotent matrix $E$ leads to a trivial consensus. The final consensus value for a row-stochastic matrix $E$ is revealed through the conserved quantity $p^{\top} E = p^{\top}$, $p^{\top} x(k) = \textit{const}$, $p^{\top} x(k) = p^{\top} x(0)$, so that if $x(k) \rightarrow \alpha \mathbf{1}_N$, one has the relation between the initial conditions and final state
    \begin{equation}
        p^{\top} \alpha \mathbf{1}_N = p^{\top} x(0) \Rightarrow \alpha = p^{\top} x(0) / p^{\top} \mathbf{1}_N,
    \end{equation}
    assuming $p^{\top} \mathbf{1}_N \neq 0$.
\end{proof}
\begin{Rem} 
    A nilpotent matrix $E$ leads to trivial consensus for all initial conditions, in finitely many iterations \cite{pasqualetti2014consensus_networks}, which is of no interest here. A row-stochastic matrix $E$ may lead to the final consensus value dependent on initial conditions. Theorem~\ref{Theo:Theorem1} assumes $p^{\top} \mathbf{1}_N \neq 0$. If $p^{\top} \mathbf{1}_N = 0$, consensus is not possible unless the initial conditions satisfy $p^{\top} x(0) = 0$ as well. Even then the final consensus value $\alpha$ cannot be determined from this conserved quantity alone as any $\alpha \in \mathbb{F}_{p}$ satisfies the relation $0 \cdot \alpha = 0$, $(p^{\top} \mathbf{1}_N \alpha = p^{\top} x(0) = 0)$. 
\end{Rem}
\begin{Def}
    Given a finite-field edge-weight graph with a row-stochastic adjacency matrix $E$, $E \mathbf{1}_N = \mathbf{1}_N$, the finite-field graph \textit{Laplacian matrix} is defined as $L \coloneqq I_N - E$, having all row sums equal to zero, $L \mathbf{1}_N = \mathbf{0}_N$. 
\end{Def}

For the graph matrix $E$ considered here, this finite-field graph Laplacian matrix has a full set of eigenvalues as well; its characteristic polynomial being $P_L(\lambda) = \lambda(\lambda - 1)^{N-1}$.

\section{General LTI Sequential Modular System Synchronization} \label{sec:general_lti_seq} \noindent
Consider a set of identical LTI sequential modular systems with state dynamics
\begin{equation}
    \begin{aligned}
        x_i(k+1) = Ax_i(k) + Bu_i(k), \qquad i = 1, \dots, N, \\
        x_i \in \mathbb{F}_{p}^n, u_i \in \mathbb{F}_{p}^m, A \in \mathbb{F}_{p}^{n \times n}, B \in \mathbb{F}_{p}^{n \times m}.
    \end{aligned}
\end{equation}
The control for each single-agent is chosen in terms of the finite-field graph Laplacian matrix $L = I_N - E$ as 
\begin{equation}
    u_i(k) = -K \sum\nolimits_j L_{ij} x_j(k) = -K \sum\nolimits_j (\delta_{ij} - e_{ij}) x_j(k),
\end{equation}
where $K \in \mathbb{F}_{p}^{m \times n}$ is the cooperative feedback gain matrix, leading to the closed-loop single-agent system 
\begin{equation} 
    \label{eq:closed_loop_single_agent}
    x_i(k+1) = Ax_i(k) - BK \sum\nolimits_j L_{ij} x_j(k).
\end{equation}
In global form, $x \coloneqq \begin{bmatrix} x_1^{\top} & \dots & x_N^{\top} \end{bmatrix}^{\top} \in \mathbb{F}_{p}^{Nn}$,~\eqref{eq:closed_loop_single_agent} reads
\begin{equation}
    \label{eq:closed_loop_global}
    \begin{aligned}
        x(k+1) &= \left(I_N \otimes A\right) x(k) - \left(L \otimes BK\right)x(k) \\
        &= \left[I_N \otimes A - (I_N - E) \otimes BK\right]x(k).
    \end{aligned}
\end{equation}
\begin{Theo}
    Let the graph matrix $E$ be row-stochastic, $E \mathbf{1}_N = \mathbf{1}_N$, with the characteristic polynomial $P_E(\lambda) = (\lambda - 1)\lambda^{N-1}$ and the left eigenvector $p^{\top} \in \mathbb{F}_{p}^N$, $p^{\top} E = p^{\top}$, pertaining to the simple eigenvalue $1$, satisfying $p^{\top} \mathbf{1}_N \neq 0$. Let the pair $(A, B)$ be finite-field \textit{stabilizable} with the stabilizing feedback gain matrix $K$. Then the multi-agent system synchronizes in finitely many discrete iterations to an autonomous behavior of the matrix $A$; $x_i(k) \rightarrow \alpha(k), \forall i$, where $\alpha(k) = \sum\nolimits_i p_i x_i(k) / \sum\nolimits_i p_i$, and $\alpha(k+1) = A \alpha(k)$.
    \label{Theo:Theorem2}
\end{Theo}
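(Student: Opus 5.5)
Since $L = I_N - E$ and $P_E(\lambda)=(\lambda-1)\lambda^{N-1}$, the plan is to block-decompose the closed-loop matrix in \eqref{eq:closed_loop_global} with a similarity transformation $T\otimes I_n$ that separates the consensus direction $\mathbf{1}_N$ from the nilpotent part of $E$. Over $\mathbb{F}_p$ I cannot assume diagonalizability, but I do not need it. The simple eigenvalue $1$ (simple because $p^{\top}\mathbf{1}_N\neq 0$) splits $\mathbb{F}_p^N = \mathrm{span}\{\mathbf{1}_N\}\oplus \ker p^{\top}$, and both summands are $E$-invariant. So I take $T=[\mathbf{1}_N \;\; W]$, whose columns $W$ form a basis of $\ker p^{\top}$. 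This gives $T^{-1}ET = \mathrm{diag}(1, E_0)$ with $E_0\in\mathbb{F}_p^{(N-1)\times(N-1)}$ nilpotent, since its characteristic polynomial is $\lambda^{N-1}$. Moreover, the first row of $T^{-1}$ is $p^{\top}/(p^{\top}\mathbf{1}_N)$.

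Applying $(T^{-1}\otimes I_n)(\cdot)(T\otimes I_n)$ to $I_N\otimes A - (I_N-E)\otimes BK$ gives the block diagonal form
\[
\mathrm{diag}\Bigl(A,\; I_{N-1}\otimes A - (I_{N-1}-E_0)\otimes BK\Bigr).
\]
Writing $\xi = (T^{-1}\otimes I_n)x$, the first block-coordinate is $\xi_1(k)=\sum_i p_i x_i(k)/\sum_i p_i = \alpha(k)$, and it obeys $\alpha(k+1)=A\alpha(k)$. This proves the claimed autonomous behavior and the formula for $\alpha(k)$ directly.

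The main step is to show that the second block $M \coloneqq I_{N-1}\otimes(A-BK) + E_0\otimes BK$ is nilpotent. Here ``finite-field stabilizable with stabilizing gain $K$'' is read, as in the finite-field setting, as $A-BK$ being nilpotent. To show $M$ is nilpotent, I bring $E_0$ to upper triangular (Jordan) form over $\mathbb{F}_p$ by a further similarity $S\otimes I_n$; this is possible because all its eigenvalues are $0\in\mathbb{F}_p$. Then $M$ becomes block upper triangular with diagonal blocks $A-BK$ and off-diagonal blocks that are multiples of $BK$. The characteristic polynomial of a block triangular matrix is the product of the characteristic polynomials of its diagonal blocks, so it equals $\lambda^{n(N-1)}$. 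By Cayley--Hamilton, $M^{n(N-1)}=0$. I expect this triangularization argument to be the main obstacle, in particular justifying that the coupling terms $E_0\otimes BK$ do not spoil nilpotency; the block-triangular structure is what resolves it.

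It follows that the disagreement coordinates $\xi_{2:N}(k)$ vanish after at most $n(N-1)$ steps. At that point $x(k)=(T\otimes I_n)\xi(k) = \mathbf{1}_N\otimes \alpha(k)$, so $x_i(k)=\alpha(k)$ for all $i$ and all subsequent $k$. This establishes synchronization in finitely many iterations to the trajectory $\alpha(k+1)=A\alpha(k)$.
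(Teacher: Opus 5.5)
Your proof is correct, and it reaches the result through a different decomposition from the paper's.

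\textbf{The paper's route.} The paper triangularizes $L$ in one step, $T^{-1}LT=\Lambda$. The closed loop then becomes block upper triangular with diagonal blocks $A, A-BK,\dots,A-BK$, and convergence is argued from those diagonal blocks. The synchronized trajectory is identified separately, through the conserved quantity $p^{\top}L=\mathbf{0}_N^{\top}$. This gives $\sum_i p_i x_i(k+1)=A\sum_i p_i x_i(k)$, and an ``if the states synchronize'' argument then matches the common limit with $\alpha(k)$.

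\textbf{Your route.} You first use the $E$-invariant splitting $\mathbb{F}_p^N=\mathrm{span}\{\mathbf{1}_N\}\oplus\ker p^{\top}$ to block-diagonalize the closed loop. This makes $\alpha(k)$ exactly the first modal coordinate. Only then do you triangularize the nilpotent block $E_0$.

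\textbf{What your route buys.} It decouples the consensus mode from the disagreement modes exactly. In the paper's triangular form the first block row still carries the $*$ couplings. Two facts are also left implicit there: that the lower $(N-1)n$ coordinates evolve autonomously, and that the first column of $T$ is proportional to $\mathbf{1}_N$. In your form, the identity $x(k)=\mathbf{1}_N\otimes\alpha(k)$ and the explicit bound of $n(N-1)$ steps follow directly. Your argument also uses $p^{\top}\mathbf{1}_N\neq 0$ structurally, rather than only to divide.

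\textbf{What the paper's route buys.} It is shorter and never needs to construct the invariant complement $\ker p^{\top}$.

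\textbf{One small correction.} Your parenthetical ``simple because $p^{\top}\mathbf{1}_N\neq 0$'' reverses the logic. Simplicity of the eigenvalue $1$ is given by $P_E(\lambda)=(\lambda-1)\lambda^{N-1}$. What $p^{\top}\mathbf{1}_N\neq 0$ supplies is $\mathbf{1}_N\notin\ker p^{\top}$, and hence the direct sum. In fact this inequality is forced by simplicity: $p^{\top}$ vanishes on $\ker E^{N-1}$, which complements $\mathrm{span}\{\mathbf{1}_N\}$.
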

\begin{proof}
    Given conditions on the graph matrix $E$, the pertaining finite-field graph Laplacian matrix $L$ has a full set of $N$ eigenvalues; hence, one can find a nonsingular matrix $T$ which transforms the graph Laplacian matrix through matrix similarity transformation into a \textit{triangular} matrix $\Lambda$ with the eigenvalues of the Laplacian matrix on its diagonal,
    \begin{equation}
        T^{-1}LT = \Lambda \Rightarrow T^{-1}(I_N - E)T = (I_N - T^{-1}ET) = \Lambda.
    \end{equation}

    This allows for a structured similarity transformation of the closed-loop total system matrix in~\eqref{eq:closed_loop_global} as
    \begin{equation}
        \begin{aligned}
            (T^{-1} \otimes I_N)\left[I_N \otimes A - L \otimes BK\right](T \otimes I_N) &= \\
            \hfill = I_N \otimes A &- \Lambda \otimes BK
        \end{aligned}
    \end{equation}
    establishing the matrix similarity relation
    \begin{equation}
        I_N \otimes A - L \otimes BK \sim I_N \otimes A - \Lambda \otimes BK.
    \end{equation}
    Diagonal elements of the triangular matrix $\Lambda$ are the eigenvalues of the graph Laplacian matrix, $(0, 1, 1, \dots, 1)$, $\Lambda_{ii} = (0, 1, 1, \dots, 1)$, so the stability of the whole system matrix 
    \begin{equation}
        I_N \otimes A - \Lambda \otimes BK = \begin{bmatrix} A & * & \dots & * \\ 0 & A - BK & \dots & * \\ 0 & 0 & \dots & * \\ \vdots & \vdots & \ddots & \vdots \\ 0 & 0 & \dots & A - BK \end{bmatrix}
    \end{equation}
    is determined by the diagonal blocks $A - \Lambda_{ii} BK$ which take only two forms
    \begin{equation}
        A - \Lambda_{ii}BK: 
        \begin{cases}
            A, & \text{if } \Lambda_{ii} = 0, \\
            A - BK, & \text{if } \Lambda_{ii} = 1.
        \end{cases}
    \end{equation}
    The simple zero eigenvalue of the Laplacian matrix $L$, leading to the diagonal block $A$, corresponds to the synchronized dynamics determined by the uncontrolled single-agent system, while the only other diagonal block, being of the form
    \begin{equation*}
        A - BK, 
    \end{equation*}    
    determines the error from a synchronized state. For synchronization of the whole system, the matrix $A-BK$ must be finite-step stable, nilpotent; $P_{A-BK}(\lambda) = \lambda^n$. This is guaranteed for a finite-field \textit{stabilizable} pair $(A, B)$ by the unique finite-field stabilizing feedback gain $K$. Hence, the common modes are \textit{not} stabilized to zero, but the error from a synchronized state becomes zero in finitely many iterations.

    Moreover, consider the linear combination of agents' states 
    \begin{equation*}
        \alpha(k) \coloneqq \sum\nolimits_i p_i x_i(k) / \sum\nolimits_i p_i,
    \end{equation*}    
    with elements $p_i$ of the left eigenvector of the graph adjacency matrix $E$ pertaining to the simple eigenvalue $1$, $p^{\top} \mathbf{1}_N \neq 0$. One has that $p^{\top} E = p^{\top} \Rightarrow p^{\top} L = \mathbf{0}_{N}^{\top}$, because
    \begin{equation}
        \begin{aligned}
            p^{\top} L =p^{\top}(I_N - E) = p^{\top} - p^{\top} E = p^{\top} - p^{\top} = \mathbf{0}_{N}^{\top},
        \end{aligned}
    \end{equation}
    so $\sum\nolimits_i p_i L_{ij} = 0$, $\forall j$. Therefore, in iterations of this linear combination of states, the network contributions cancel out, 
    \begin{equation}
        \begin{aligned}
            \sum\nolimits_i p_i x_i(k{+}1) &= \sum\nolimits_i p_i Ax_i(k) - BK \sum\nolimits_{i,j} p_i L_{ij} x_j(k) \\
            &= \sum\nolimits_i p_i Ax_i(k) = A \sum\nolimits_i p_i x_i(k).
        \end{aligned}
    \end{equation}
    Hence, if the agents' states synchronize, $x_i(k) \rightarrow \alpha(k), \forall i$, one has that $\sum\nolimits_i p_i x_i(k) \rightarrow \sum\nolimits_i p_i \alpha(k)$, which means
    \begin{equation*}
        \sum\nolimits_i p_i x_i(k) / \sum\nolimits_i p_i \rightarrow \alpha(k). 
    \end{equation*}
    Therefore, $\alpha(k+1) = A\alpha(k)$, with 
    \begin{equation*}
        \alpha(0) \coloneqq \sum\nolimits_i p_i x_i(0) / \sum\nolimits_i p_i.
    \end{equation*}
\end{proof}
\begin{Rem}
    Given a finite-field \textit{stabilizable} $(A, B)$ pair, finding the cooperative feedback gain $K$ to synchronize the system is \textit{not} a problem; the unique stabilizing feedback control gain $K$ is obvious from the controllable canonical form or Kalman decomposed form of the system $(A, B)$, rendering the closed-loop system matrix $(A-BK)$ nilpotent; $P_{A-BK}(\lambda)=\lambda^{n}$. The crucial problem is thus to find the graph topologies and finite-field edge-weights that yield the required graph matrix $E$ and its pertaining Laplacian matrix $L$, which is NP-hard,~\cite{pasqualetti2014consensus_networks}. This, nevertheless, is independent of the single-agent characteristics, $(A, B)$, allowing for a design separation formally akin to the conventional synchronizing region approach. Our results thus crucially differ from those in~\cite{xu2014leaderfollowing}, where the design of the cooperative feedback gain $K$ depends on the graph topology. Moreover, note that here, because of the peculiar spectrum required of the graph adjacency matrix $E$, and consequently the graph Laplacian matrix $L$, the synchronizing region reduces to a single point, $1$. In finite vector spaces, robust stability cannot be defined. Fundamentally, this is tied to the inherent robustness of sequential modular systems and Hamming space structures to noise and uncertainties.
\end{Rem}
\begin{Rem} 
    The total closed-loop cooperative system characteristic polynomial thus generally has the form
    \begin{equation}
        P_{\left[I_N \otimes A - L \otimes BK\right]}(\lambda) = P_A(\lambda)P_{A-BK}^{N-1}(\lambda).
    \end{equation}
    With the matrix $K$ chosen as the unique finite-field stabilizing gain for the $(A, B)$ pair, this reads $P_A(\lambda)\lambda^{(N-1)n}$. The uncontrolled single-agent system characteristic polynomial $P_A(\lambda)$ there is arbitrary, generally describing cyclic behavior, corresponding to the synchronized state dynamics. In contrast, a finite-step stable, (finite-field Hurwitz), nilpotent system matrix $A$, $P_A(\lambda)=\lambda^{n}$, leads to trivial synchronization with all agents' states converging to the origin on their own, requiring no cooperative interactions for that. This special case is of little interest to our purposes.
\end{Rem}
\begin{Rem}
    Theorem~\ref{Theo:Theorem2} is a general consequence of finite-field scalar single-integrator consensus,~\cite{pasqualetti2014consensus_networks}. Note that in~\cite{xu2014leaderfollowing}, directed acyclic graphs, (DAG), are considered. The graphs there are possibly time-varying, switching, but it is nevertheless required that those be DAG at all times. Here, we do not require DAG topologies; rather, general conditions on the graph matrix $E$ allowing for scalar single-integrator consensus suffice for the LTI single-agent systems as well, under single-agent stabilizability. Moreover, our cooperative feedback gain $K$ does not depend on the graph topology, but only on single-agent system characteristics.
\end{Rem}

One can find all graph matrices that satisfy the required conditions by an exhaustive search, which is a combinatorially large NP-hard problem,~\cite{pasqualetti2014consensus_networks}. To obviate listing all the possibilities, a set of recursive constructive schemes could build a subclass of admissible graph matrices, which is not perfectly exhaustive, but practically useful. One way is the Kronecker product graph construction,~\cite{pasqualetti2014consensus_networks}, [Theorem 5.2]. Another possibility is afforded by a special class of matrix similarity transformations, as detailed in the following result.
\begin{Prop} 
    A matrix similarity transformation with a row-stochastic nonsingular matrix $T$ preserves the properties of the graph matrix $E$ required for consensus and synchronization.
    \label{Prop:mat_sim_transformation}
\end{Prop}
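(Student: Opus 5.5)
The plan is to check, one at a time, the properties of $E$ required in Theorem~\ref{Theo:Theorem1} and Theorem~\ref{Theo:Theorem2}, and to show that each survives the map $E \mapsto \tilde E \coloneqq T^{-1}ET$. Here $T \in \mathbb{F}_{p}^{N\times N}$ is nonsingular and row-stochastic, $T\mathbf{1}_N=\mathbf{1}_N$. The properties are: row-stochasticity $E\mathbf{1}_N=\mathbf{1}_N$; the characteristic polynomial $(\lambda-1)\lambda^{N-1}$ (or $\lambda^N$ in the nilpotent case); and the existence of a left eigenvector for the simple eigenvalue $1$ with $p^{\top}\mathbf{1}_N\neq 0$.

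\textbf{Row sums and spectrum.} First I would note that $T\mathbf{1}_N=\mathbf{1}_N$ together with invertibility gives $T^{-1}\mathbf{1}_N=\mathbf{1}_N$. Hence $T^{-1}$ is also row-stochastic. Then
\begin{equation*}
\tilde E\mathbf{1}_N=T^{-1}ET\mathbf{1}_N=T^{-1}E\mathbf{1}_N=T^{-1}\mathbf{1}_N=\mathbf{1}_N .
\end{equation*}
Similarity over $\mathbb{F}_{p}$ preserves the characteristic polynomial, since $\det(\lambda I_N - T^{-1}ET)=\det(T^{-1})\det(\lambda I_N-E)\det(T)$. So $P_{\tilde E}=P_E$, and nilpotency is preserved as well. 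By the same computation the Laplacian $\tilde L = I_N-\tilde E = T^{-1}LT$ remains a finite-field Laplacian, with $\tilde L\mathbf{1}_N=\mathbf{0}_N$ and spectrum $(0,1,\dots,1)$.

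\textbf{Left eigenvector.} The new left eigenvector is $\tilde p^{\top}\coloneqq p^{\top}T$, because
\begin{equation*}
\tilde p^{\top}\tilde E=p^{\top}TT^{-1}ET=p^{\top}ET=p^{\top}T=\tilde p^{\top}.
\end{equation*}
This vector is nonzero since $T$ is nonsingular. It is unique up to scaling, because the eigenvalue $1$ is simple (its algebraic multiplicity is $1$ by the preserved characteristic polynomial). The key condition then follows from row-stochasticity of $T$:
\begin{equation*}
\tilde p^{\top}\mathbf{1}_N=p^{\top}T\mathbf{1}_N=p^{\top}\mathbf{1}_N\neq 0 .
\end{equation*}
With all hypotheses verified, Theorem~\ref{Theo:Theorem1} and Theorem~\ref{Theo:Theorem2} apply directly to $\tilde E$. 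The consensus value becomes $\tilde\alpha=p^{\top}Tx(0)/p^{\top}\mathbf{1}_N$.

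\textbf{Main obstacle.} The only step that is more than routine is the left-eigenvector condition. The quantity $p^{\top}\mathbf{1}_N$ is not a similarity invariant in general. It is preserved here precisely because $T\mathbf{1}_N=\mathbf{1}_N$, so this is where the row-stochasticity of $T$ is essential and not merely used for the row sums of $\tilde E$.

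A secondary remark is that $\tilde E$ may induce a different graph, with different sparsity and weights. The proposition concerns only the algebraic properties required by the theorems, so no graph-theoretic condition such as a spanning tree needs to be checked separately.
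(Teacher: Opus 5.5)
Your proof is correct and follows essentially the same route as the paper: you show $T^{-1}\mathbf{1}_N=\mathbf{1}_N$ to get $\tilde E\mathbf{1}_N=\mathbf{1}_N$, use similarity invariance of the characteristic polynomial, and relate the left eigenvectors by $\tilde p^{\top}=p^{\top}T$ (the paper's $\tilde p^{\top}T^{-1}=p^{\top}$), so that $\tilde p^{\top}\mathbf{1}_N=p^{\top}\mathbf{1}_N\neq 0$. Your version is a bit more careful: it uses the characteristic polynomial rather than ``spectrum'', notes that the left eigenvector is unique up to scaling, and covers the nilpotent case, but the argument is the same.
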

\begin{proof}
    Given an admissible graph matrix $E$, the matrix $T^{-1}ET = \tilde{E}$ has the \textit{same} spectrum as $E$, $\sigma(T^{-1}ET) = \sigma(E) = (1, 0, \dots, 0)$, by matrix similarity. If  $T \mathbf{1}_N = \mathbf{1}_N$, then $\tilde{E} \mathbf{1}_N = T^{-1}ET \mathbf{1}_N = T^{-1}E \mathbf{1}_N = T^{-1} \mathbf{1}_N = \mathbf{1}_N$, preserving the row-stochastic property of the transformed matrix $\tilde{E}$, which together are sufficient requirements for consensus. Note, namely, that if an invertible matrix $T$ is row-stochastic, $T \mathbf{1}_N = \mathbf{1}_N$, so is its inverse
    \begin{equation}
        T \mathbf{1}_N = \mathbf{1}_N \Rightarrow T^{-1}T \mathbf{1}_N = T^{-1} \mathbf{1}_N \Rightarrow T^{-1} \mathbf{1}_N = \mathbf{1}_N.
    \end{equation}
    
    Moreover, the left eigenvector of $E$ pertaining to the simple eigenvalue $1$, $p^{\top} E = p^{\top}$, transforms under this matrix similarity transformation as $\tilde{p}^{\top} T^{-1} = p^{\top}$,
    \begin{equation}
        \begin{aligned}
            \tilde{p}^{\top} \tilde{E} = \tilde{p}^{\top}\Rightarrow
            \tilde{p}^{\top} T^{-1}ET &= \tilde{p}^{\top} \Rightarrow \tilde{p}^{\top} T^{-1}E = \tilde{p}^{\top} T^{-1}, \\
            p^{\top} E &= p^{\top} \Rightarrow \tilde{p}^{\top} T^{-1} \eqqcolon p^{\top},
        \end{aligned}
    \end{equation}
    whence $p^{\top} \mathbf{1}_N = \tilde{p}^{\top} T^{-1} \mathbf{1}_N = \tilde{p}^{\top} \mathbf{1}_N$, so that if $p^{\top} \mathbf{1}_N \neq 0$ one also has that $\tilde{p}^{\top} \mathbf{1}_N \neq 0$, as required.
\end{proof}

Proposition~\ref{Prop:mat_sim_transformation} motivates a search for all \textit{nonsingular} row-stochastic matrices $T$ over the finite field $\mathbb{F}_p$, which is likewise a combinatorially large problem.
\begin{Rem}
    Cogredience transformation with a permutation matrix $P$, $P^{-1} = P^{\top}$ certainly satisfies conditions of Proposition~\ref{Prop:mat_sim_transformation} as $P \mathbf{1}_N = \mathbf{1}_N$ over a general finite field as well, but that transformation produces equivalent, \textit{isomorphic}, graphs. There exist pairs of nonisomorphic graphs with the \textit{same} spectrum, (\textit{cospectral}, \textit{isospectral} graphs), and matrix similarity transformations with row-stochastic matrices $T$, up to permutations, are one way of finding those. 
\end{Rem}
\begin{Rem}
    Given a row-stochastic invertible matrix $T$ and a permutation matrix $P$, both $TP$ and $PT$ are row-stochastic and invertible. The entire permutation subgroup thus defines left and right \textit{cosets} for each $T$. Furthermore, given an admissible graph matrix $E$, matrix similarity transformations with $T$ and $TP$ for any permutation matrix $P$ give \textit{isomorphic} graphs; $(TP)^{-1}E(TP) = P^{\top}(T^{-1}ET)P\sim(T^{-1}ET)$. Hence, a single representative element of each left coset of T suffices.

    The application of the matrix $PT$, in contrast, transforms by $T$ an admissible graph matrix $\tilde{E}$ of an isomorphic graph; $(PT)^{-1}E(PT)=T^{-1}(P^{\top}EP)T, P^{\top}EP=:\tilde{E}\sim E$. The results generally pertain to nonisomorphic, distinct graphs.
    \label{Rem:TP_row_stochastic}
\end{Rem}

\section{Mathematical Foundation} \label{sec:math_foundation} \noindent
Motivated by Proposition~\ref{Prop:mat_sim_transformation} in Section~\ref{sec:general_lti_seq}, this section gives the mathematical details on square matrices over $\mathbb{F}_{p}$. We denote the set of all $N \times N$ matrices over $\mathbb{F}_{p}$ by $M_N(\mathbb{F}_{p})$. Its cardinality is given by freely selecting $N^2$ entries from $\mathbb{F}_{p}$, which results in $p^{N^2}$. 

\subsection{\texorpdfstring{Set of Invertible Matrices over $\mathbb{F}_{p}$}{F\_p}} \label{subsec:invertible_matrices} \noindent
The set of invertible matrices $T \in\mathbb{F}_{p}^{N\times N}$, having $\det(T) \ne 0 \pmod{p}$, forms the \textit{General Linear Group} over $\mathbb{F}_{p}$, $GL_N(\mathbb{F}_{p})$. The cardinality of $GL_N(\mathbb{F}_{p})$ is a standard result derived from counting the number of ways to choose $N$ linearly independent column vectors in $\mathbb{F}_{p}^N$,~\cite{arapuraAlgebraNotes}, [Theorem 10.6],
\begin{align}
    |GL_N(\mathbb{F}_{p})| = \prod\nolimits_{i=0}^{N-1} (p^N - p^i).
\end{align}

\subsection{\texorpdfstring{Set of Row-stochastic Matrices over $\mathbb{F}_{p}$}{F\_p}} \label{subsec:RS_matrices} \noindent
A matrix $T\in \mathbb{F}_{p}^{N \times N}$ is \emph{row-stochastic (RS) over} $\mathbb{F}_{p}$ if all its entries are in $\mathbb{F}_{p}$ and each of its rows sums to one in $\mathbb{F}_{p}$; \textit{i.e.}, $T\mathbf{1}_N = \mathbf{1}_N$.

The set of all RS matrices over $\mathbb{F}_{p}$ is denoted by $M_{N,p}^{RS}$. This constraint imposes $N$ independent linear equations on the $N^2$ entries of $T$. Since the equations are linearly independent, $M_{N,p}^{RS}$ defines an affine subspace of $M_N(\mathbb{F}_{p})$.

\noindent \textbf{The set $M_{N,p}^{RS}$ is closed under matrix multiplication.}
Namely, the RS constraint ensures that the vector $\mathbf{1}_N$ is a right eigenvector of $T$ corresponding to the eigenvalue $\lambda = 1$. This fixed-point property is fundamental. As $T\mathbf{1}_N = \mathbf{1}_N$, the product of two RS matrices, $T_1$ and $T_2$, is likewise row-stochastic;
\begin{align}
    (T_1 T_2) \mathbf{1}_N = T_1 (T_2 \mathbf{1}_N) = T_1 \mathbf{1}_N = \mathbf{1}_N.
\end{align}

The dimensionality of the affine subspace $M_{N,p}^{RS}$ is determined by $N$ linear constraints. For any row $i$, $\sum\nolimits_{j=1}^{N} [T]_{i, j} = 1 $. This means that $N-1$ entries in the row can be chosen freely from $\mathbb{F}_{p}$, and the final entry $T_{i, N}$ is uniquely fixed by the equation $[T]_{i, N} = 1 - \sum\nolimits_{j=1}^{N-1} [T]_{i, j} $. Since this calculation is independent across all $N$ rows, the total number of free choices is $N \times (N-1) = N^2 - N$. The cardinality of the space of all RS matrices is therefore 
\begin{align}
    |M_{N,p}^{RS}| = p^{N(N-1)}.
    \label{eq:row_stoch:card}
\end{align}

\subsection{\texorpdfstring{Set of Row-stochastic Invertible Matrices over $\mathbb{F}_{p}$}{F\_p}} \label{subsec:row_inv_matrices} \noindent
The set of all invertible row-stochastic matrices is denoted by $G_{N,p}^{RS}$. Since all such matrices are invertible, one has $G_{N,p}^{RS} \subset GL_N(\mathbb{F}_{p})$, with row-stochasticity implying $T\mathbf{1}_N = \mathbf{1}_N$.

\noindent \textbf{The set of row-stochastic invertible matrices $G_{N,p}^{RS}$ forms a group.} Namely, the RS property is closed under matrix multiplication. Furthermore, $GL_N(\mathbb{F}_{p})$ is a group under matrix multiplication, and the product of two invertible matrices is invertible. Consequently, the set $G_{N,p}^{RS}$ itself forms a finite matrix group under multiplication, a subgroup of $GL_N(\mathbb{F}_{p})$.

\noindent \textbf{Structure.} Since $G_{N,p}^{RS}$ consists of matrices $T$ that fix the vector $\mathbf{1}_N$, ($T\mathbf{1}_N = \mathbf{1}_N$), $G_{N,p}^{RS}$ is isomorphic to a group of matrices that stabilize \textit{any} other nonzero vector in $\mathbb{F}_{p}^N$, \textit{e.g.}, the standard basis vector $\mathbf{e}_N = (0, \dots, 0, 1)^{\top}$, as shown in the following result.
\begin{Lem}
    \label{Lem:Lemma1}
    Let $T$ be row-stochastic and invertible, $T \in G_{N,p}^{RS}$. Let $Q$ be a matrix structured as
    \begin{align}
        Q = \begin{bmatrix}
            \bar{Q}_{N-1} & \mathbf{1}_{N-1}\\
            \mathbf{0}_{N-1}^{\top} & 1
        \end{bmatrix},
        \label{Eq:P}
    \end{align}
    where $\bar{Q}_{N-1}$ is invertible $\bar{Q}_{N-1} \in GL_{N-1}(\mathbb{F}_{p})$. Assume that $A \in GL_N(\mathbb{F}_{p})$ stabilizes $\mathbf{e}_N$; \textit{i.e.}, $A \mathbf{e}_N = \mathbf{e}_N$. Then, $QAQ^{-1} \in G^{RS}_{N,p}$ and $QAQ^{-1}=T$ for $A=Q^{-1}TQ$. This matrix similarity establishes a bijection and an isomorphism.
\end{Lem}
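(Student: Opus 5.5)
The key observation is that $Q$ maps $\mathbf{e}_N$ to $\mathbf{1}_N$. Its last column is $(\mathbf{1}_{N-1}^{\top},1)^{\top}=\mathbf{1}_N$, so $Q\mathbf{e}_N=\mathbf{1}_N$, and hence $Q^{-1}\mathbf{1}_N=\mathbf{e}_N$. Before using this, I check that $Q$ is invertible. Expanding along the last row gives $\det Q=\det\bar{Q}_{N-1}\cdot 1\neq 0$ in $\mathbb{F}_p$, since $\bar{Q}_{N-1}\in GL_{N-1}(\mathbb{F}_p)$. Thus $Q\in GL_N(\mathbb{F}_p)$, and conjugation by $Q$ is an automorphism of $GL_N(\mathbb{F}_p)$.

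Next I show the two directions of the correspondence. Let $H\coloneqq\{A\in GL_N(\mathbb{F}_p): A\mathbf{e}_N=\mathbf{e}_N\}$ be the stabilizer of $\mathbf{e}_N$; it is a subgroup by the same closure argument used for $G^{RS}_{N,p}$.
\begin{itemize}
\item For $A\in H$, compute $QAQ^{-1}\mathbf{1}_N=QA\mathbf{e}_N=Q\mathbf{e}_N=\mathbf{1}_N$. So $QAQ^{-1}$ is row-stochastic, and it is invertible as a product of invertible matrices. Hence $QAQ^{-1}\in G^{RS}_{N,p}$.
\item For $T\in G^{RS}_{N,p}$, set $A\coloneqq Q^{-1}TQ$. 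Then $A\mathbf{e}_N=Q^{-1}T\mathbf{1}_N=Q^{-1}\mathbf{1}_N=\mathbf{e}_N$, so $A\in H$, and $QAQ^{-1}=T$.
\end{itemize}

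These two maps, $\phi(A)=QAQ^{-1}$ and $\psi(T)=Q^{-1}TQ$, are mutually inverse, so $\phi:H\to G^{RS}_{N,p}$ is a bijection. It is a group homomorphism because $\phi(A_1A_2)=QA_1Q^{-1}QA_2Q^{-1}=\phi(A_1)\phi(A_2)$. Therefore $\phi$ is a group isomorphism.

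No step is a real obstacle. The only points needing care are the determinant of the block-triangular $Q$ and reading the lemma's phrase ``bijection and isomorphism'' as the statement that $\phi$ is an isomorphism between $H$ and $G^{RS}_{N,p}$. The same argument works for any nonzero vector $v$ in place of $\mathbf{e}_N$, using any invertible $Q$ with $Qv=\mathbf{1}_N$.
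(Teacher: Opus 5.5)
Your proof is correct and follows the same route as the paper: conjugation by $Q$, using $Q\mathbf{e}_N=\mathbf{1}_N$ and $Q^{-1}\mathbf{1}_N=\mathbf{e}_N$ to get $QAQ^{-1}\mathbf{1}_N=QA\mathbf{e}_N=Q\mathbf{e}_N=\mathbf{1}_N$. You also supply what the paper's proof leaves implicit. That is, $\det Q=\det\bar{Q}_{N-1}\neq 0$; $Q^{-1}TQ$ fixes $\mathbf{e}_N$ for every $T\in G^{RS}_{N,p}$; and the two conjugation maps are mutually inverse homomorphisms, so your version actually justifies the claimed bijection and isomorphism.
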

\begin{proof}
    First note that $Q\mathbf{e}_N = \mathbf{1}_N$. As $T \in G_{N,p}^{RS}$ is row-stochastic, $T\mathbf{1}_N= \mathbf{1}_N$, one has
    \begin{align*}
        T \mathbf{1}_N = Q\mathbf{e}_N,
    \end{align*}
    where $\mathbf{1}_N$ on the right-hand side of the row-stochasticity relation is replaced by $Q\mathbf{e}_N$. As $A \mathbf{e}_N = \mathbf{e}_N$, one obtains
    \begin{align*}
        T \mathbf{1}_N = Q \mathbf{e}_N = QA \mathbf{e}_N.
    \end{align*}
    Finally, note that $\mathbf{e}_N=Q^{-1} \mathbf{1}_N$, so
    \begin{align*}
        T \mathbf{1}_N = Q \mathbf{e}_N = QAQ^{-1} \mathbf{1}_N =\mathbf{1}_N.
    \end{align*}
    Hence, $QAQ^{-1}\in G_{N,p}^{RS}$ and $T\sim A$ for $A=Q^{-1}TQ$.
\end{proof}

The importance of Lemma~\ref{Lem:Lemma1} is that once the invertible matrix $A$ is generated to stabilize $\mathbf{e}_N$; \textit{i.e.}, $A \mathbf{e}_N = \mathbf{e}_N$, the resulting matrix $T=QAQ^{-1}$ is both row-stochastic and invertible, $T \in G^{RS}_{N,p}$. Indeed, the need to independently verify nonsingularity is avoided. One can verify that such a matrix $A$ has the following structure,
\begin{align}
    A = \begin{pmatrix} A_{N-1} & \mathbf{0}_{N-1} \\ \mathbf{c}^{\top} & 1 \end{pmatrix}, 
    \label{eq:A}
\end{align}
where $A_{N-1}$ is an $(N-1) \times (N-1)$ matrix and $\mathbf{c^{\top}} \in \mathbb{F}_p^{N-1}$ is a free row vector of length $N-1$. For $A$ to be in $GL_N(\mathbb{F}_{p})$, $A_{N-1}$ must be an invertible; \textit{i.e.}, $A_{N-1} \in GL_{N-1}(\mathbb{F}_{p})$. The bottom row vector $\mathbf{c}^{\top}$ can be chosen arbitrarily, providing $p^{N-1}$ options. The structure of matrix $Q$ is given in~\eqref{Eq:P}. One can use any invertible matrix $\bar{Q}_{N-1}\in GL_{N}(\mathbb{F}_{p})$ and an obvious choice is $I_{N-1}$. As a result, the cardinality of $G^{RS}_{N,p}$ is
\begin{align}
    \vert G^{RS}_{N,p}\vert=p^{N-1}\prod\nolimits_{i=0}^{N-2} (p^{N-1} - p^i) .
\end{align}

\subsection{Set of Permutation Matrices} \label{subsec:permutation_matrices} \noindent
A permutation matrix $P \in \mathcal{P}_N$ is doubly-stochastic, (both column- and row-stochastic), binary and orthogonal; $P^{-1} = P^{\top},\: P^{\top}P = I_N$. Permutation matrices are always invertible, $\det(P) = \pm1$, and since $p$ is prime, $\det(P)\neq 0$. Based on these, one can conclude the following nested group inclusions
\begin{align}
    \mathcal{P}_N \subset G_{N,p}^{RS} \subset GL_N(\mathbb{F}_{p}).
\end{align}
Cardinality of the permutation group $\mathcal{P}_N$ is given by a purely combinatorial count, independent of $p$, and equals $N!$. 

Following Remark~\ref{Rem:TP_row_stochastic}, among all the row-stochastic invertible matrices, $G_{N,p}^{RS}$, it is useful to identify matrices from the same left coset with respect to the permutation subgroup $\mathcal{P}_N$.

\noindent \textbf{Permutation variational check.}
The following result is useful for determining whether one matrix is a column or row permutation of another.
\begin{Prop}
    Let $P \in \mathcal{P}_N$ be a permutation transformation matrix so that, applied to any column vector $x$, $Px$ permutes its components. Then for a matrix $T$, $PT$ permutes the rows of matrix $T$, and $TP^{\top}$ permutes the columns of matrix $T$.
\end{Prop}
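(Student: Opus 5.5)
Fix a permutation $\pi$ of $\{1,\dots,N\}$ and encode it by the matrix $P$ with entries $[P]_{ij}=1$ if $j=\pi(i)$ and $0$ otherwise. Each row and each column of $P$ then contains exactly one unit entry. The whole argument is a direct entrywise computation, done in three steps: the action on vectors, then on rows, then on columns.

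\textbf{Step 1 (vectors).} For a column vector $x$,
\begin{equation*}
[Px]_i=\sum\nolimits_j [P]_{ij}x_j = x_{\pi(i)}.
\end{equation*}
So $Px$ is the vector whose $i$-th component is $x_{\pi(i)}$, which is the permutation of components assumed in the statement.

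\textbf{Step 2 (rows of $PT$).} For a matrix $T$, compute
\begin{equation*}
[PT]_{ij}=\sum\nolimits_k [P]_{ik}[T]_{kj}=[T]_{\pi(i),j}.
\end{equation*}
Hence $[PT]_{i,:}=[T]_{\pi(i),:}$. Equivalently, applying Step 1 to each column $[T]_{:,j}$ shows that $PT$ places row $\pi(i)$ of $T$ into position $i$.

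\textbf{Step 3 (columns of $TP^{\top}$).} The cleanest route is to transpose: $TP^{\top}=(PT^{\top})^{\top}$. By Step 2, $PT^{\top}$ permutes the rows of $T^{\top}$, which are the columns of $T$, by the same $\pi$. Transposing back gives $[TP^{\top}]_{:,j}=[T]_{:,\pi(j)}$. A direct check agrees:
\begin{equation*}
[TP^{\top}]_{ij}=\sum\nolimits_k [T]_{ik}[P]_{jk}=[T]_{i,\pi(j)}.
\end{equation*}

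Two remarks complete the argument. Since every entry of $P$ is $0$ or $1$ and each sum contains exactly one nonzero term, the computation is valid over $\mathbb{F}_p$ with no modular subtlety. Also, $P^{\top}=P^{-1}$ (as noted earlier in the paper), so $TP^{\top}$ realizes the inverse-consistent column action; this is the form used in the coset comparisons of Remark~\ref{Rem:TP_row_stochastic}.

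The main care point is bookkeeping rather than substance. One must keep the convention for $\pi$ consistent between the row action and the column action, and in particular recognize that the column permutation is produced by $P^{\top}$ rather than $P$. Routing Step 3 through the transpose identity avoids this pitfall.
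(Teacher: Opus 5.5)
Your proof is correct and takes essentially the paper's approach: the row action of $PT$ comes from applying $P$ to each column of $T$, and the column action of $TP^{\top}$ follows by transposition, $TP^{\top}=(PT^{\top})^{\top}$. The index formulas $[PT]_{ij}=[T]_{\pi(i),j}$ and $[TP^{\top}]_{ij}=[T]_{i,\pi(j)}$ make precise what the paper states in words, and they also show that both actions use the same permutation $\pi$.
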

\begin{proof}
    Applying a permutation matrix $P$ to any vector permutes its components. Each column of $PT$ is of the form $P t_j$, where $t_j$ is the $j$-th column of $T$. Thus, $PT$ consists of the rows of $T$ rearranged according to the permutation $P$. For the second claim, observe that transposing gives $(PT)^{\top} = T^{\top} P^{\top}$. So $T^{\top} P^{\top}$ permutes the columns of $T^{\top}$ and $TP^{\top}$ permutes the columns of $T$.
\end{proof}

Hence, the conditions for matrices $T_1$ and $T_2$ to be column or row permutations of one another are $T_1 = T_2P$ or $T_1 = PT_2$, respectively. These imply
\begin{align}
    T_2^{-1} T_1 = P \quad \text{or} \quad T_1 T_2^{-1} = P, 
    \label{eq:per_check}
\end{align}
respectively. Checking conditions in~\eqref{eq:per_check} involves computing the matrix inverse, matrix product, and checking if the result is a permutation matrix, which might be computationally costly. In the following, we give two algorithms for determining whether one matrix is a column (row) permutation of another.
\begin{Rem}
   The computational complexity of Algorithm~\ref{Alg:per:check} is $\mathcal{O}(N^3)$: outer loop $(n=1 \text{ to } N)$, inner loop $(j=1 \text{ to } N)$, and the comparison of two vectors of dimension $N$. 
\end{Rem}

Instead of searching for every column of $T_1$ within $T_2$, one can sort the columns of both matrices lexicographically. If one matrix is a column-permuted version of the other, their sorted versions will be identical. This is given in Algorithm~\ref{Alg:col:check}. To sort the columns of $T \in \mathbb{Z}_p^{N \times N}$ lexicographically, one can employ a comparison-based sorting algorithm using a vector-wise comparison operator. Given two columns, $[T]_{:,i},\: [T]_{:,j} \in \mathbb{Z}_p^N$, we define $[T]_{:,i}<_\mathrm{lex} [T]_{:,j}$ if at the first index $k \in \{1, \dots, N\}$ where $[T]_{ki} \neq [T]_{kj}$, the condition $[T]_{ki} < [T]_{kj}$ holds in the standard ordering of integers modulo $p$.
\begin{algorithm}[H]
    \caption{Permutation Variational Check}
    \begin{algorithmic}[1] 
    \label{Alg:per:check}
    \REQUIRE Matrices $T_{1}, T_{2} \in \mathbb{R}^{N \times N}$
    \STATE Initialize $M=\{\}$, $\texttt{permutation\_matrices} \gets$ \textbf{True}.
    \FOR{$n = 1$ to $N$}
        \STATE Set \texttt{match\_found} $\gets$ \textbf{False}.
        \FOR{$j = 1$ to $N$}
            \IF{$j \notin M$ and $[T_{2}]_{:,j} \equiv [T_{1}]_{:,n} $}
                \STATE Add $j$ to $M$.
                \STATE \texttt{match\_found} $\gets$ \textbf{True}.
                \STATE \textbf{break}
            \ENDIF
        \ENDFOR
        \IF{\texttt{match\_found} = \textbf{False}}
            \STATE \texttt{permutation\_matrices}  $\gets$ \textbf{False}.
            \STATE \textbf{break}
        \ENDIF
    \ENDFOR
    \RETURN \texttt{permutation\_matrices}.
    \end{algorithmic}
\end{algorithm}
\begin{algorithm}[H]
    \caption{Optimized Column-Permutation Equivalence Check}
    \begin{algorithmic}[1] 
    \label{Alg:col:check}
    \REQUIRE Matrices $T_{1}, T_{2} \in \mathbb{Z}_p^{N \times N}$
    \STATE Initialize $\texttt{permutation\_matrices} \gets$ \textbf{True}.
    \STATE Sort columns of $T_1$ lexicographically to obtain $T_1^{\text{sorted}}$.
    \STATE Sort columns of $T_2$ lexicographically to obtain $T_2^{\text{sorted}}$.
    \FOR{$j = 1$ to $N$}
        \IF{$[T_1^{\text{sorted}}]_{:,j} \neq [T_2^{\text{sorted}}]_{:,j}$}
            \STATE \texttt{permutation\_matrices} $\gets$ \textbf{False}.
            \STATE \textbf{break}
        \ENDIF
    \ENDFOR
    \RETURN \texttt{permutation\_matrices}.
    \end{algorithmic}
\end{algorithm}
\begin{Rem}
    The computational complexity of Algorithm~\ref{Alg:col:check} is $\mathcal{O}(N^2 \log N)$. Each comparison between two columns requires at most $N$ scalar comparisons, resulting in a complexity of $\mathcal{O}(N)$ per operation. Since sorting $N$ elements requires $\mathcal{O}(N \log N)$ comparisons, the total computational complexity for the matrix is $\mathcal{O}(N^2 \log N)$.
\end{Rem}

\section{\texorpdfstring{Algorithms to Generate the Matrix $T$}{T}} \label{sec:algorithms} \noindent
In this section, we give two algorithms to generate the transformation matrix $T$ of Proposition~\ref{Prop:mat_sim_transformation}.

\subsection{Sampling and Rejection Algorithm} \label{subsec:SARA} \noindent
The simplest approach relies on generating a uniformly random matrix in $M_{N,p}^{RS}$ and then checking the invertibility and permutation conditions. The \textit{Sampling and Rejection} method (SAR) is summarized in Algorithm~\ref{Alg:SAR}. Once a new matrix $T$ is generated by Algorithm~\ref{Alg:SAR}, one should check if the new matrix $T$ is a column permutation of previously generated matrices, or not, by Algorithms~\ref{Alg:per:check}~and~\ref{Alg:col:check} and exclude the new matrix $T$ if it is.
\begin{algorithm}[H] 
    \caption{Sampling and Rejection}
    \begin{algorithmic}[1] 
    \label{Alg:SAR}
    \STATE \textbf{Sampling step:} 
    \STATE Sample $T_{i,j},\: i=1,...,N,\:j=1,...,N-1$ uniformly and independently from $\mathbb{F}_{p}$.
    \label{TF:s1}
    \STATE Set $T_{i, N} = 1 - \sum\nolimits_{j=1}^{N-1} T_{i, j} $.
    \STATE \textbf{Rejection step:}
    \IF{$\det(T) \ne 0$ and $T^{\top}T \neq  I$}
        \RETURN $T$.
    \ELSE
        \STATE Go to Step~\ref{TF:s1}.
    \ENDIF
    \end{algorithmic}
\end{algorithm}
\begin{Rem}
    The selection process involves $\mathcal{O}(N^2)$ random finite-field element selections. As the sum of independent, uniformly distributed random variables over a field $\mathbb{F}_{p}$ is itself uniformly distributed, the resulting matrix $T$ is guaranteed to be uniformly distributed over the entire affine space $M_{N,p}^{RS}$ \cite{pasqualetti2014consensus_networks}. One can determine if $T$ is nonsingular by calculating its determinant, or performing Gaussian elimination over $\mathbb{F}_{p}$ with the computational complexity $\mathcal{O}(N^3)$. 
\end{Rem}
\begin{Lem}
    The probability density of successfully generating matrix $T \in G_{N,p}^{RS},\: TT^{\top}\neq I_N$ is given by 
    \begin{align}
        \delta_{N,p}^{RS} = \prod\nolimits_{i=1}^{N-1}(1-\frac{1}{p^i})-\frac{N!}{p^{N(N-1)}}.
        \label{eq:delta}
    \end{align}
    \label{Lem:delta_RS_Np}
\end{Lem}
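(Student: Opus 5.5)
Since Algorithm~\ref{Alg:SAR} samples $T$ uniformly from the affine space $M_{N,p}^{RS}$ (as noted in the preceding remark), the success probability is the ratio of the number of accepted matrices to $|M_{N,p}^{RS}| = p^{N(N-1)}$ from~\eqref{eq:row_stoch:card}. I will count the accepted set as $G_{N,p}^{RS}$ minus the matrices removed by the condition $T^{\top}T \neq I_N$, and divide each count by $p^{N(N-1)}$.

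\textbf{Invertible matrices.} For the invertible part I use the cardinality of $G_{N,p}^{RS}$ obtained from Lemma~\ref{Lem:Lemma1} and the structure~\eqref{eq:A}:
\begin{align*}
|G_{N,p}^{RS}| = p^{N-1}\prod\nolimits_{i=0}^{N-2}(p^{N-1}-p^i).
\end{align*}
Factoring $p^{N-1}$ out of each of the $N-1$ factors gives
\begin{align*}
\prod\nolimits_{i=0}^{N-2}(p^{N-1}-p^i) = p^{(N-1)^2}\prod\nolimits_{i=0}^{N-2}(1-p^{i-N+1}).
\end{align*}
Reindexing with $k = N-1-i$ turns the product into $\prod_{k=1}^{N-1}(1-p^{-k})$. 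The total power of $p$ is then $p^{N-1}\cdot p^{(N-1)^2} = p^{N(N-1)}$, which cancels exactly with $|M_{N,p}^{RS}|$. This yields the first term of~\eqref{eq:delta}.

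\textbf{Excluded matrices.} Next I identify the matrices removed by the rejection test. If $T$ is row-stochastic with $T^{\top}T = I_N$, then $T$ is orthogonal over $\mathbb{F}_p$. The intended reading, consistent with the permutation discussion in Section~\ref{subsec:permutation_matrices}, is that the excluded set is exactly $\mathcal{P}_N$, which has $N!$ elements, all lying in $G_{N,p}^{RS}$. Subtracting these gives the count $|G_{N,p}^{RS}| - N!$, and dividing by $p^{N(N-1)}$ gives the second term of~\eqref{eq:delta}.

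\textbf{Main obstacle.} The delicate step is the claim that the only row-stochastic orthogonal matrices over $\mathbb{F}_p$ are permutation matrices. Over the reals this follows from nonnegativity, but over $\mathbb{F}_p$ it can fail. For example, with $p=3$ and $N=4$, the matrix $J-I$ (where $J$ is the all-ones matrix) is row-stochastic and satisfies $(J-I)^2 = I$. I would therefore first try to prove that nonpermutation orthogonal RS matrices cannot occur for the relevant parameters. If that fails, I would state the lemma under the interpretation that the rejection step removes exactly the permutation subgroup $\mathcal{P}_N$, as Remark~\ref{Rem:TP_row_stochastic} suggests the algorithm intends. In that case~\eqref{eq:delta} follows directly from the two counts above.
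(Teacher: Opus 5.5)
Your argument matches the paper's proof. The paper likewise takes $|G_{N,p}^{RS}| = p^{N-1}\prod_{i=0}^{N-2}(p^{N-1}-p^i)$, subtracts $N!$, divides by $p^{N(N-1)}$, and factors out $p^{(N-1)^2}$ as you do. It subtracts $N!$ without comment, so it silently identifies the set rejected by the test $T^{\top}T\neq I_N$ with $\mathcal{P}_N$. Your fallback reading is therefore exactly the assumption the paper's proof rests on.

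You are right to worry about that identification, but your counterexample is wrong. For $p=3$, $N=4$, the matrix $J-I$ has row sums $N-1=3\equiv 0$, so it is not row-stochastic. Also $(J-I)^2=(N-2)J+I=2J+I\neq I_4$.

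A correct family is $T=aJ-I$ with $aN\equiv 2 \pmod p$. Then $T\mathbf{1}_N=(aN-1)\mathbf{1}_N=\mathbf{1}_N$, and since $T$ is symmetric, $T^{\top}T=T^2=a(aN-2)J+I_N=I_N$. Concrete instances:
\begin{itemize}
\item for $p=3$, $N=4$, take $a=2$;
\item for $p=5$, $N=3$, take $a=4$;
\item for $p=2$ and even $N\ge 4$, $J-I$ itself works.
\end{itemize}
When $N\ge 3$, each row of $T$ has $N-1\ge 2$ equal nonzero off-diagonal entries $a$, so $T\notin\mathcal{P}_N$.

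Consequently, the first branch of your plan (proving that no non-permutation orthogonal row-stochastic matrices exist) cannot succeed whenever $N\ge 3$ and either $p$ is odd with $p\nmid N$, or $p=2$ with $N$ even. Under the literal test $T^{\top}T\neq I_N$, the rejected set has more than $N!$ elements. For $N=3$, $p=5$ it contains at least the $12$ matrices $P$ and $4J-P$, $P\in\mathcal{P}_3$. In these cases the stated formula for $\delta_{N,p}^{RS}$ overstates the acceptance probability. The formula holds exactly when the rejection step is read as $T\notin\mathcal{P}_N$, which is your second branch and what the paper evidently intends.
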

\begin{proof}
    The density $\delta_{N,p}^{RS}$ is precisely equal to the cardinality of $T \in G_{N,p}^{RS},\: TT^{\top}\neq I_N$, to the cardinality of a set of nonsingular uniformly random and unconstrained $(N-1) \times (N-1)$  matrices over $\mathbb{F}_{p}$,     
        \begin{align*}
            \delta_{N,p}^{RS} &= \frac{ p^{N-1} \prod\nolimits_{i=0}^{N-2} (p^{N-1} - p^i)-N!}{p^{N(N-1)}} \\
            &= \frac{\prod\nolimits_{i=0}^{N-2} (p^{N-1} - p^i)}{p^{(N-1)^2}}-\frac{N!}{p^{N(N-1)}} \\
            &= \frac{\prod\nolimits_{i=0}^{N-2} p^{N-1}\prod\nolimits_{i=0}^{N-2} (1-p^{i-N+1})}{p^{(N-1)^2}}-\frac{N!}{p^{N(N-1)}} \\
            &= \frac{p^{(N-1)^2}\prod\nolimits_{i=0}^{N-2} (1-p^{i-N+1})}{p^{(N-1)^2}} -\frac{N!}{p^{N(N-1)}} \\
            &= \prod\nolimits_{i=1}^{N-1}(1-\frac{1}{p^i})-\frac{N!}{p^{N(N-1)}}.
        \end{align*}
\end{proof}
Based on~\eqref{eq:delta}, $\delta_{N,p}^{RS}$ is zero only for $N=2,\:p=2$ and positive otherwise. In addition, for a fixed $p$, as $N$ increases,
\begin{equation}
    \begin{aligned}
        \lim_{N \rightarrow \infty} \delta_{N,p}^{RS} &= \lim_{N \rightarrow \infty}\prod\nolimits_{i=1}^{N-1}(1-\frac{1}{p^i})-\frac{N!}{p^{N(N-1)}} \\
        &= \prod\nolimits_{i=1}^{\infty}(1-\frac{1}{p^i}) > 0,
    \end{aligned}
\end{equation}
which is nonzero. This means that for a fixed field size $p$, as $N$ grows, the probability density of successfully generating a matrix $T$ by Algorithm~\ref{Alg:SAR} is nonzero. Moreover, from~\eqref{eq:delta}, one can easily verify that for a fixed $N$ and $p_1>p_2$, one has $\delta_{N,p_1}^{RS}> \delta_{N,p_2}^{RS}$. For a fixed $N$, as $p$ increases,
\begin{align}
    \lim_{p \rightarrow \infty} \delta_{N,p}^{RS} = \lim_{p \rightarrow \infty}\prod\nolimits_{i=1}^{N-1}(1-\frac{1}{p^i})-\frac{N!}{p^{N(N-1)}} = 1.
\end{align}
This means that for a fixed dimension $N$, if the finite field size $p$ is very large, almost every matrix $T$ generated by Algorithm~\ref{Alg:SAR} is nonsingular, RS, and $TT^{\top}\neq I_N$. The 3D plot of the probability density $\delta_{N,p}^{RS}$ based on the matrix dimension $N$ and the field size $p$ is given in Figure~\ref{fig:delta_TRS_NP}. 
\begin{figure}[bp]
    \centerline{\includegraphics[trim=110 0 50 45, clip, width=0.465\textwidth]{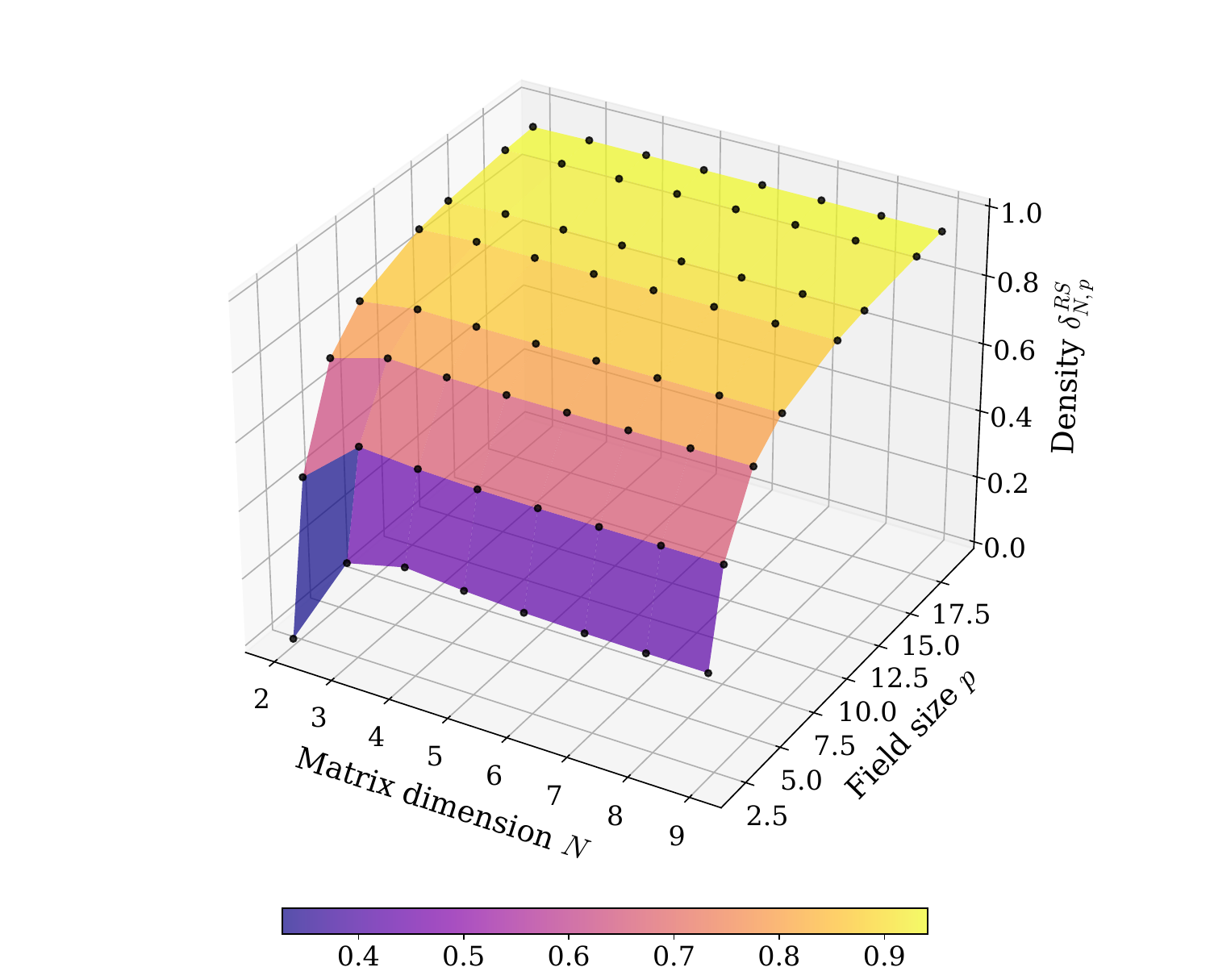}}
    \caption{The probability density $\delta_{N,p}^{RS}$ based on the matrix dimension $N$ and the field size $p$.}
    \label{fig:delta_TRS_NP}
\end{figure}

\subsection{Triangular Structure} \label{subsec:triangular_structure} \noindent
In this subsection, we restrict the structure of the matrix $T$ to be \textit{triangular}. The main motivation for that is that if diagonal entries of a triangular $T$ are nonzero, the matrix $T$ is nonsingular, and the off-diagonal elements can be randomly generated to render $T$ row-stochastic. As a result, one requires neither calculating the determinant nor checking row dependencies, making the algorithm extremely efficient, with computational complexity $\mathcal{O}(N^2)$. Let $U_{N,p}^{RS}$ denote the set of all matrices that are triangular, invertible, RS and \textit{not} permutation matrices. In particular, in the following result, we establish that the only triangular permutation matrix is the identity matrix.
\begin{Prop}
    \label{Prop:identity}
    If a row-stochastic invertible matrix $T$ is triangular and a permutation matrix, then it is necessarily the identity matrix.
\end{Prop}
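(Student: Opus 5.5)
The plan is to show directly that a permutation matrix which is triangular must be the identity. Row-stochasticity and invertibility are then automatic and play no further role. I will treat the lower-triangular case; the upper-triangular case follows by transposition, since $P^{\top}$ is again a permutation matrix and is upper (lower) triangular exactly when $P$ is lower (upper) triangular.

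Let $P \in \mathcal{P}_N$ correspond to the permutation $\pi$ of $\{1,\dots,N\}$. Each row $i$ of $P$ then contains exactly one nonzero entry, equal to $1$, located in column $\pi(i)$. Lower triangularity means $[P]_{ij}=0$ whenever $j>i$, which forces $\pi(i)\le i$ for every $i$.

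I will argue by induction on $i$ that $\pi(i)=i$.
\begin{itemize}
\item For $i=1$, the constraint $\pi(1)\le 1$ gives $\pi(1)=1$.
\item Suppose $\pi(k)=k$ for all $k<i$. Then the values $1,\dots,i-1$ are already taken, and $\pi$ is injective, so $\pi(i)\notin\{1,\dots,i-1\}$. Combined with $\pi(i)\le i$, this gives $\pi(i)=i$.
\end{itemize}
Hence $\pi$ is the identity permutation and $P=I_N$.

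An alternative, more matrix-flavoured route goes through the diagonal. A triangular matrix has determinant equal to the product of its diagonal entries, and $\det(P)=\pm1\neq0$ over $\mathbb{F}_p$. So every diagonal entry of $P$ is nonzero, and since the entries of a permutation matrix are binary, every diagonal entry equals $1$. Each row of a permutation matrix contains exactly one $1$, so all off-diagonal entries vanish and $P=I_N$.

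I would present this second argument in the paper, since it is shorter and uses facts already stated in Section~\ref{subsec:permutation_matrices}. There is no real obstacle here. The only care needed is to observe that the binary entries $0,1$ of a permutation matrix remain distinct and nonzero (for the $1$) in $\mathbb{F}_p$, which holds because $p$ is prime, so $1\neq0$.
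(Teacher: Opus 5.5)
Your proposal is correct, and the diagonal argument you intend to present is essentially the paper's proof: invertibility of a triangular matrix forces every diagonal entry to be nonzero, hence equal to $1$ for a $0/1$ matrix, and then the single $1$ per row forces every off-diagonal entry to vanish (you make this last step explicit, whereas the paper leaves it implicit). Your first argument, by induction using $\pi(i)\le i$ and the injectivity of $\pi$, is also valid; it is a purely combinatorial alternative that needs neither determinants nor the field structure.
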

\begin{proof}
    A triangular matrix is invertible \textit{if and only if} all its diagonal entries are nonzero. As $T$ is a permutation matrix with entries restricted to $\{0, 1\}$, the invertibility condition requires that every diagonal entry $T_{ii}$ must be 1. Because the diagonal is already populated with 1s, all off-diagonal entries must be 0. Therefore, $T$ is the identity matrix.
\end{proof}

Algorithm~\ref{Alg:TF} summarizes generating upper triangular matrices $T \in U_{N,p}^{RS}$. The algorithm for generating lower triangular matrices is similar and omitted for brevity. We collectively refer to algorithms generating upper or lower triangular matrices as \textit{Triangular Form} (TF) algorithms.
\begin{algorithm}[H] 
    \caption{Upper Triangular-Form}
    \begin{algorithmic}[1] 
    \label{Alg:TF}
    \STATE \textbf{Sampling step:} 
    \STATE Sample $T_{i,i} \in \mathbb{F}_p \setminus \{0\}$ for $i = 1, \dots, N-1$ uniformly.
        \STATE Sample $T_{i,j}$ for $i=1, \dots, N-2,\:j=i+1, \dots, N-1$ uniformly. \label{TF:r1}
        \STATE Set the last column and last diagonal entry
        \begin{itemize}
            \item $T_{i,N} = 1 - \sum_{j=i}^{N-1} T_{i, j} $ for $i=1, \dots, N-1$,
            \item $T_{N,N} = 1$.
        \end{itemize}
    \STATE \textbf{Rejection step:}
    \IF{$T = I$}
    \STATE Go to Step~\ref{TF:r1}.
    \ELSE \RETURN $T$.
    \ENDIF
    \end{algorithmic}
\end{algorithm}
\begin{Lem}
    \label{Lem:U_Np:card}
    The number of upper triangular invertible row-stochastic non-identity matrices is given by 
    \begin{align}
        |U_{N,p}^{RS}| = (p-1)^{N-1} p^{(N-1)(N-2)/2}-1.
        \label{eq:U_Np:card}
    \end{align}
\end{Lem}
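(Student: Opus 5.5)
I will prove Lemma~\ref{Lem:U_Np:card} by a direct parameter count of upper triangular, invertible, row-stochastic matrices, followed by removal of the identity using Proposition~\ref{Prop:identity}. Let $T$ be upper triangular with $T\mathbf{1}_N=\mathbf{1}_N$. The last row of $T$ has only one possibly nonzero entry, $T_{N,N}$, so row-stochasticity forces $T_{N,N}=1$; this is the assignment made in Algorithm~\ref{Alg:TF}. For $i=1,\dots,N-1$, row $i$ has support in columns $i,\dots,N$, and the constraint $\sum_{j=i}^{N}T_{i,j}=1$ determines $T_{i,N}$ uniquely once $T_{i,i},\dots,T_{i,N-1}$ are chosen. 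Hence the map from the free entries $\{T_{i,j}: 1\le i\le j\le N-1\}$ to such matrices is a bijection onto the upper triangular row-stochastic matrices.

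Next I impose invertibility. A triangular matrix is invertible if and only if all its diagonal entries are nonzero. Since $T_{N,N}=1$ already, the condition is $T_{i,i}\in\mathbb{F}_p\setminus\{0\}$ for $i=1,\dots,N-1$, which gives $(p-1)^{N-1}$ choices. One point needs care here. The diagonal entry $T_{i,i}$ is among the free entries and is not the determined last entry $T_{i,N}$ (for $i\le N-1$ we have $i<N$), so this restriction does not interact with the row-sum constraint. The remaining free entries are the strictly upper entries in columns $i+1,\dots,N-1$ of rows $i=1,\dots,N-2$. There are $\sum_{i=1}^{N-2}(N-1-i)=(N-1)(N-2)/2$ of them, each unconstrained in $\mathbb{F}_p$. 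This gives $(p-1)^{N-1}p^{(N-1)(N-2)/2}$ upper triangular invertible row-stochastic matrices. This count also certifies that Algorithm~\ref{Alg:TF} samples exactly this set.

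Finally, by Proposition~\ref{Prop:identity}, the only permutation matrix in this set is $I_N$. The identity does belong to the set, since it is obtained by taking all $T_{i,i}=1$ and all free off-diagonal entries zero, which forces $T_{i,N}=0$. Removing it yields \eqref{eq:U_Np:card}. The only step I expect to require attention is the bijection argument, namely checking that fixing the last column does not clash with the nonzero-diagonal requirement. The case $N=1$ is degenerate: there $T_{N,N}$ is both the diagonal and the last entry, and the formula correctly gives $0$.
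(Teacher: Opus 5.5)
Your proposal is correct and follows the same parameter-count argument as the paper: force $T_{N,N}=1$, let the last column absorb the row-sum constraints, choose the $N-1$ leading diagonal entries nonzero and the $(N-1)(N-2)/2$ remaining strictly upper entries freely, then remove $I_N$ using Proposition~\ref{Prop:identity}. Your explicit checks, that the nonzero-diagonal condition never involves a determined entry $T_{i,N}$ and that $I_N$ actually lies in the counted set, spell out what the paper leaves implicit.
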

\begin{proof}
    A row-stochastic upper triangular matrix can be structured as follows: the last column can be initially reserved and later selected so as to preserve the row-stochastic property; $N-1$ entries, (all diagonal elements except for the last one), can be selected to be nonzero. The last diagonal entry should be $1$ to make $T$ row-stochastic. Remaining $(N-1)(N-2)/2$ off-diagonal entries can be selected randomly. Based on Proposition~\ref{Prop:identity}, for a triangular matrix, the only possible permutation matrix is the identity matrix, which should be excluded as trivial. As a result, the number of upper triangular invertible row-stochastic matrices is given by~\eqref{eq:U_Np:card}.
\end{proof}

An analogous result for lower triangular invertible row-stochastic matrices can be established similarly. Lemma~\ref{Lem:U_Np:card} specifies that the number of upper, (and similarly lower), triangular matrices $T$ satisfying $T \in G_{N,p}^{RS},\: T\neq I_N$ increases with both the matrix dimension $N$ and the field size $p$; that is, one can find more matrices $T \in U_{N,p}^{RS}$ for larger $N$ and $p$, see Figures~\ref{fig:carU_N}~and~\ref{fig:carU_p}. 
Note that, with the identity matrix $I_N$, the set of upper and the set of lower triangular invertible row-stochastic matrices constitute distinct subgroups of $G_{N,p}^{RS}$.
\begin{figure}[htbp]
    \centerline{\includegraphics[trim=10 13 10 10, clip, width=0.485\textwidth]{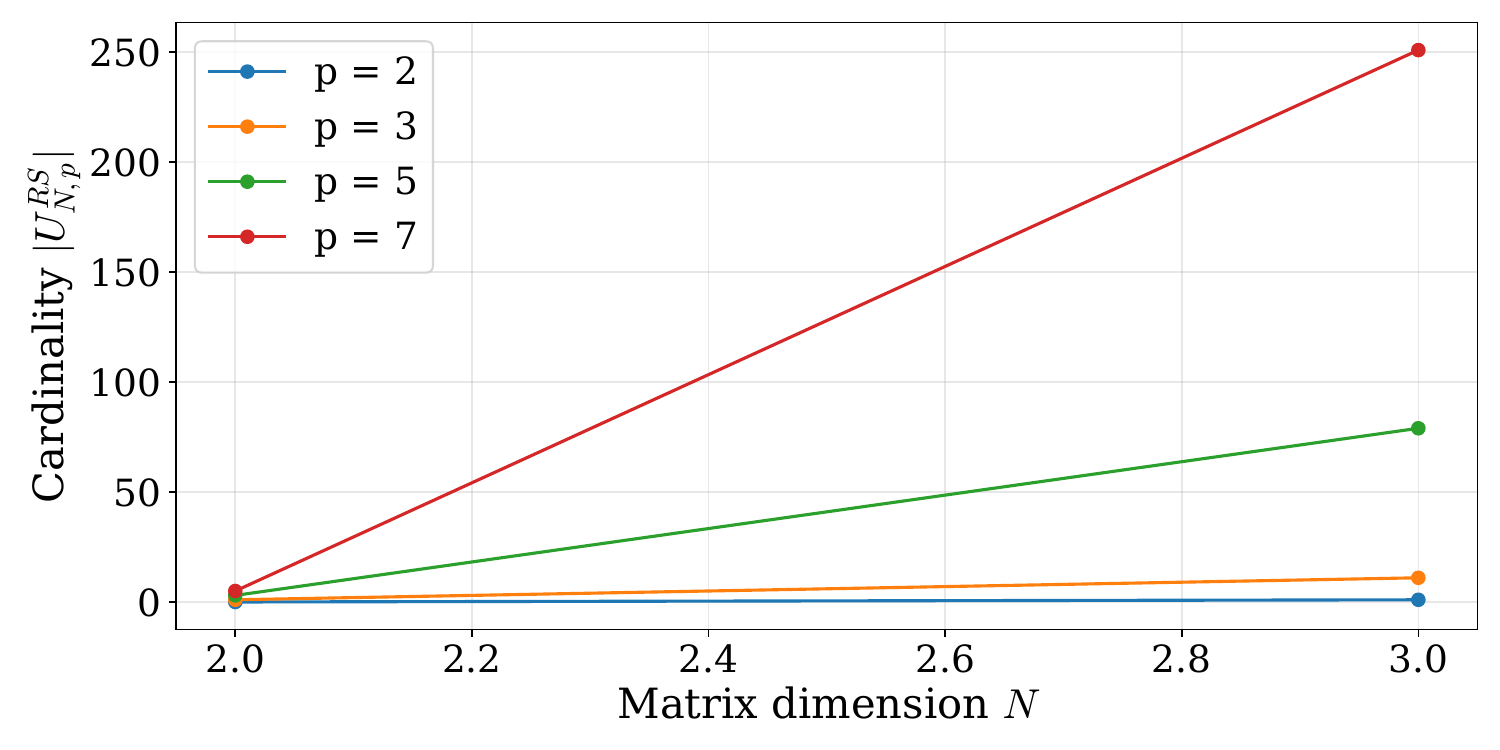}}
    \caption{The cardinality $|U_{N,p}^{RS}|$ of the upper triangular matrices based on the matrix dimension $N$.}
    \label{fig:carU_N}
\end{figure}
\begin{figure}[htbp]
    \centerline{\includegraphics[trim=10 13 10 10, clip, width=0.485\textwidth]{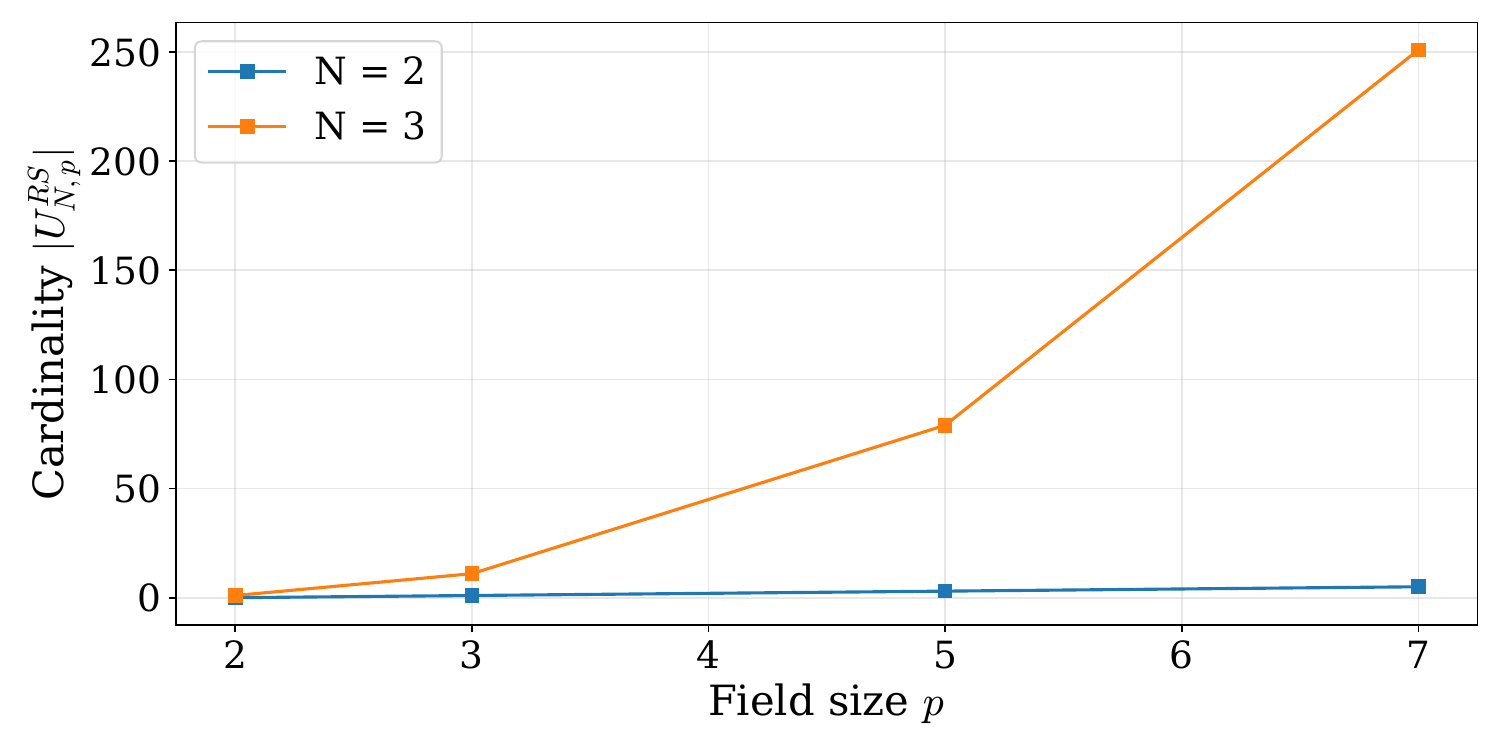}}
    \caption{The cardinality $|U_{N,p}^{RS}|$ of the upper triangular matrices based on the field size $p$.}
    \label{fig:carU_p}
\end{figure}

\section{Simulation Results} \label{sec:sim_results} \noindent
This section provides a numerical validation of the proposed framework and algorithms using a small-scale system. Consider a network of $N = 2$ agents operating over $\mathbb{F}_3$. 

\subsection{Matrix Generation and Density} \label{subsec:matrix_gen_den} \noindent
The space of all row-stochastic matrices $M_{2,3}^{RS}$ has a dimension of $N(N-1) = 2$, resulting in $|M_{2,3}^{RS}| = 3^2 = 9$ possible matrices. For a matrix $T = \begin{bmatrix} a & 1-a; & c & 1-c \end{bmatrix} \in M_{2,3}^{RS}$ to be invertible, ($T \in G_{2,3}^{RS}$), we require $\det(T) = a - c \neq 0$. This condition is satisfied by six matrices. 

Excluding the two permutation matrices; \textit{i.e.}, $I_2$ and the swap matrix, $ \begin{bmatrix} 0 & 1; & 1 & 0 \end{bmatrix}$, we obtain four valid transformation $T$ matrices. The success probability for the SAR algorithm is thus $\delta_{2,3}^{RS} = 4/9$. Our simulation results confirm that the SAR algorithm successfully generates all four non-permutation matrices in $G_{2,3}^{RS}$. The TF algorithm successfully identifies the two triangular non-permutation matrices, such as $T = \begin{bmatrix} 1 & 0; & 2 & 2 \end{bmatrix}$.

\subsection{Topological Admissibility} \noindent
The four admissible $T$ matrices in $G_{2,3}^{RS} \setminus \mathcal{P}_2$ form two distinct cosets with respect to $\mathcal{P}_2$. By applying these transformations to an initial admissible graph matrix $E = \begin{bmatrix} 0 & 1; & 0 & 1 \end{bmatrix}$, we synthesize the set of all admissible communication structures. As shown in our numerical tests, this process yields topologically equivalent matrices (\textit{e.g.}, $\begin{bmatrix} 1 & 0; & 1 & 0 \end{bmatrix}$) and topologically distinct ones, (\textit{e.g.}, $\begin{bmatrix} 2 & 2; & 2 & 2 \end{bmatrix}$). This confirms that the proposed decoupling allows for the efficient identification of all graph topologies that guarantee modular consensus. 

Furthermore, the initial admissible graph matrix $E = \begin{bmatrix} 0 & 1; & 0 & 1 \end{bmatrix}$, can be generated from the Jordan form, $J = \begin{bmatrix} 0 & 0; & 0 & 1 \end{bmatrix}$, having the required spectrum, through a matrix similarity transformation with $Q = \begin{bmatrix}1 & 1; & 0 & 1 \end{bmatrix}$. Namely, $J\mathbf{e}_2=\mathbf{e}_2$, therefore, with $Q\mathbf{e}_2=\mathbf{1}_2$, (\textit{c.f.}, Lemma~\ref{Lem:Lemma1}), one obtains a row-stochastic matrix $E=QJQ^{-1} = \begin{bmatrix}0 & 1; & 0 & 1 \end{bmatrix}$.

All discussed algorithms are implemented in Python and made publicly available at \url{https://github.com/simonlehky/generating_T_modulo_p}.

The presented example is simple for the sake of brevity and to more clearly highlight all the compelling relations between the fewer elements involved, which should not detract from the fact that the proposed approach works generally and scales well with the number of agents $N$ and the field characteristic $p$, greatly facilitating the search for admissible finite-field graphs, which is crucial for modular consensus and synchronization.

\section{Conclusion} \label{sec:conclusion} \noindent
This paper deals with finite-field consensus and synchronization protocols for modular systems in a unified way. Crucially, we bring efficient algorithms to find the admissible graphs. Consensus and synchronization of modular systems are difficult precisely because finding all admissible graphs is an NP-hard problem. The algorithms we propose scale better than the combinatorially difficult comprehensive search. Numerical results demonstrate the effectiveness of our approach.

\bibliographystyle{IEEEtran}
\bibliography{biblio}

\end{document}